
\documentclass[11pt]{article}


\usepackage{amsthm,amsmath,amssymb,amsfonts,graphicx,multicol,multirow}
\usepackage[small,it]{caption}
\usepackage{epsfig}
\usepackage{epstopdf}
\usepackage{enumerate}

\usepackage{fullpage}
\usepackage[boxed,linesnumbered]{algorithm2e}
\usepackage[sanserif]{complexity}
\usepackage{graphics}
\usepackage{tikz}

\newclass{\coUL}{coUL}
\newclass{\TISP}{TISP}
\newclass{\ReachUL}{ReachUL}
\newclass{\ReachFewL}{ReachFewL}
\newclass{\ReachLFew}{ReachLFew}
\newclass{\FewL}{FewL}
\newclass{\StrongUL}{StrongUL}
\newclass{\StrongFewL}{StrongFewL}
\newclass{\StrongLFew}{StrongLFew}
\newclass{\PromiseMinUnique}{PromiseMinUnique}
\newclass{\MinUnique}{MinUnique}
\newclass{\PromiseMangroves}{PromiseMangroves}
\newclass{\Mangroves}{Mangroves}
\newclass{\nspaceamb}{NspaceAmbiguity}
\newclass{\ruspace}{RUSPACE}
\newclass{\reachuspace}{ReachUSPACE}
\newclass{\ShP}{ShortestPath}
\newclass{\RB}{RedBluePath}
\newclass{\EB}{EvenPath}
\newclass{\PM}{PerfectMatching}
\newclass{\HO}{HallObs}
\newclass{\ExPM}{ExactPM}
\newclass{\EPM}{EvenPM}

\newcommand{\comment}[1]{}
\newcommand{\ssize}{n^{\frac{1}{2}+\epsilon}}
\newcommand{\sspace}{n^{\frac{1}{2}+\epsilon}}

\newcommand{\tildeO}{\widetilde{O}}

\newcommand{\algoPlanarReach}{{\tt DirectedPlanarReach}}

\newcommand{\PlSep}{\tt PlanarSeparator}
\newcommand{\PlSepFam}{{\tt PlanarSeparatorFamily}}
\newcommand{\PlShortPath}{{\tt PlanarDist}}
\newcommand{\PlShortReport}{{\tt PlanarShortPath}}
\newcommand{\BellmanFord}{{\tt Bellman-Ford}}
\newcommand{\PlOddCycle}{{\tt UPlanarOddCycle}}

\SetKwInOut{Input}{Input}
\SetKwInOut{Output}{Output}
\SetKwInOut{Promise}{Promise}

\usepackage{epsfig}
\usepackage{url}

\newtheorem{theorem}{Theorem}
\newtheorem{lemma}{Lemma}[section]
\newtheorem{claim}{Claim}[section]
\newtheorem{proposition}{Proposition}[section]
\newtheorem{observation}{Observation}
\newtheorem{corollary}{Corollary}[section]

\newtheorem{definition}{Definition}

\newcommand{\ignore}[1]{}

\newcommand{\newfontobj}[2]{
  \newcommand{#1}[1]{
    \expandafter\def\csname##1\endcsname{{#2 ##1}}}}

\comment{

\class{RL} \class{BPL}

\class{PSPACE}          
\class{AM}              
\class{AMTIME}
\class{NC}
\class{AMSUBEXP}
\class{MA}
\class{NP}
\class{NLIN}
\class{L}
\class{BPSPACE}
\class{LINSPACE}
\class{P}
\class{RP}
\class{BPP}
\class{RNC}
\class{PrAM}
\class{NPSV}
\class{advBPP}
\class{DTIME}
\class{E}
\class{ZPP}
\class{EXPSPACE}
\class{SUBEXP}
\class{PH}
\class{IP}
\class{SNP}
\class{SNQP}
\class{AVP}
\class{QP}
\class{SNTIME}
\class{DSPACE}
\class{BPSPACE}
\class{BPTIME}

\class{DTIME}

\class{AvgTIME} \class{DistNP} \class{AZPP} \class{AVE}
\class{DistE} \class{DistEXP} \class{E} \class{NE} \class{UE}
\class{EXP} \class{pcomp} \class{NEXP} \class{UP} \class{DTIME}
\class{NTIME} \class{R} \class{NSUBEXP} \class{SUBEXP} \class{PF}
\class{LINcomp}
\class{BPTIME}
\class{DistNLIN}
\class{AME}
\class{SIZE}
\class{NSIZE}
\class{SAT}
\class{SIZE}

}

\begin{document}
\title{Simultaneous Time-Space Upper Bounds for Certain Problems in Planar Graphs\thanks{Research supported by Research-I Foundation}\footnote{A preliminary version was accepted in WALCOM 2015}}



\author{Diptarka Chakraborty\thanks{Indian Institute of Technology, Kanpur {\tt diptarka@cse.iitk.ac.in}} \and  Raghunath Tewari\thanks{Indian Institute of Technology, Kanpur {\tt rtewari@cse.iitk.ac.in}} }

\maketitle

\begin{abstract}
In this paper, we show that given a weighted, directed planar graph $G$, and any $\epsilon >0$, there exists a polynomial time and $O(\sspace)$ space algorithm that computes the shortest path between two fixed vertices in $G$.

We also consider the {\RB} problem, which states that given a graph $G$ whose edges are colored either red or blue and two fixed vertices $s$ and $t$ in $G$, is there a path from $s$ to $t$ in $G$ that alternates between red and blue edges. The {\RB} problem in planar DAGs is {\NL}-complete. We exhibit a polynomial time and $O(\sspace)$ space algorithm (for any $\epsilon >0$) for the {\RB} problem in planar DAG.

In the last part of this paper, we consider the problem of deciding and constructing the perfect matching present in a planar bipartite graph and also a similar problem which is to find a Hall-obstacle in a planar bipartite graph. We show the time-space bound of these two problems are same as the bound of shortest path problem in a directed planar graph.
\end{abstract}

\section{Introduction}
\label{sec:intro}
Computing shortest path between two vertices in a weighted, directed graph is a fundamental problem in computer science. There are several popular and efficient algorithms that are known for this problem such as Dijkstra's algorithm \cite{Dijkstra59} and Bellman-Ford algorithm \cite{Bellman58} \cite{Ford56}. Both of these algorithms require linear amount of space and run in polynomial time. However Bellman-Ford algorithm is more versatile since it is also able to handle graphs with negative edge weights (but no negative weight cycles). There is also a more recent algorithm by Klein, Mozes and Weimann \cite{KMW09} which runs in polynomial time (with better parameters) but still requires linear space, however this algorithm considers shortest path problem only for directed planar graphs.

Another fundamental problem in space complexity theory that is closely related to the shortest path problem is the problem of deciding reachability between two vertices in a directed graph. This problem characterizes the complexity class non-deterministic logspace or {\NL}. Savitch \cite{Sav70} showed that {\NL} is contained in $\L^2$ ({\L} is {\em deterministic log-space} class), however Savitch's algorithm takes $\Theta(n^{\log n})$ time. Barnes et. al. \cite{BBRS92} gave a $O(n/2^{k\sqrt{\log n}})$ space, polynomial time algorithm for this problem. It is an important open question whether we can exhibit a polynomial time and $O(n^{1-\epsilon})$ space algorithm for the reachability problem in directed graphs, for any $\epsilon >0$. The readers may refer to a survey by Wigderson \cite{widgerson-survey} to know more about the reachability problem. Imai et. al. \cite{INPVW13} answered this question for the class of directed planar graph. They gave a polynomial time and $O(\sspace)$ space algorithm by 
efficiently constructing a {\em planar separator} and applying a divide and conquer strategy. In a recent work, their result has been extended to the class of {\em high-genus} and {\em $H$-minor-free} graphs \cite{CPTVY14}. 

The natural question is whether we can extend these results to the shortest path problem. For a special class of graphs known as {\em grid graphs} (a subclass of planar graphs), Asano and Doerr devised a $O(n^{\frac{1}{2}+\epsilon}) $ space and polynomial time algorithm for the shortest path problem \cite{AD11}. In their paper, Asano and Doerr posed the question whether their result can be extended to planar graphs in general. In this paper, we give a positive answer to their question and exhibit the first sub-linear space, polynomial time algorithm for the shortest path problem in planar graphs. Note that the shortest path problem for both undirected and directed graph is {\NL}-complete \cite{Gold08}. 

Another interesting generalization of the reachability problem, is the {\RB} problem (for the definition see Section \ref{sec:redblue}). The {\RB} problem is {\NL}-complete even when restricted to planar DAGs \cite{Kul09}. A natural relaxation of the above problem is {\EB} problem defined in Section \ref{sec:redblue}. In general, {\EB} problem is {\NP}-complete \cite{LP84}, but for planar graphs, it is known to be in {\P} \cite{Ned99}. In this paper, we also give the first sublinear space and polynomial time algorithm known for the {\RB} and the {\EB} problem in planar DAGs.

Another central problem in Algorithms and Complexity Theory is the problem of finding the perfect matching (denoted as {\PM}). The best known upper bound for {\PM} is \emph{non-uniform {\SPL}} \cite{ARZ99} and the best hardness known is {\NL}-hardness \cite{CSV84}. However, {\PM} in planar graph is known to be {\L}-hard \cite{DKLM10}. If we consider the planar bipartite graph, then {\PM} problem is known to be in {\UL} \cite{DGKT12}. {\PM} problem in bipartite graphs can be solved in polynomial time using Ford-Fulkerson algorithm for network flow \cite{KT05} but that takes the space linear in number of edges present in the graph. Unfortunately, no sublinear ($O(n^{1-\epsilon})$, for any $\epsilon >0$) space and polynomial time algorithm is known for {\PM} in planar bipartite graphs. Same is true for the problem of finding Hall-obstacle (denoted as {\HO} (Decision + Construction) for planar bipartite graphs, whereas it is known from \cite{DGKT12}, that {\HO} (Decision) is in {\co-UL} and {\HO} (Construction) is in {\NL}, when the graph under consideration is planar bipartite.\\
The problem {\ExPM} (Decision) (first posed in \cite{PY82}) denotes the problem of deciding the presence of a perfect matching in a given graph $G$ with edges coloured with Red or Blue, containing exactly $k$ Red edges for an integer $k$. This problem is not even known to be in {\P}. Now we consider a natural relaxation of the above problem just by concentrating on the perfect matching containing even number of Red edges and denote this problem as \emph{{\EPM}}. {\EPM} problem is in {\P} for bipartite graphs and in {\NL} for planar bipartite graphs \cite{DGKT12}. Till now, we do not know about any sublinear space and polynomial time algorithm for {\EPM} problem when concentrating only on planar bipartite graphs.

\subsection*{Our Contribution}
In this paper, we prove the following results.
\begin{theorem}
 \label{thm:shortpath}
 For directed planar graphs (containing no negative weight cycle and weights are bounded by polynomial in $n$) and for any constant $ 0 < \epsilon < \frac{1}{2} $, there is an algorithm that solves {\ShP} problem in polynomial time and $ O(n^{\frac{1}{2}+\epsilon}) $ space, where $n$ is the number of vertices of the given graph.
\end{theorem}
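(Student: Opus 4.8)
The plan is to adapt the planar-separator divide-and-conquer framework of Imai et al.\ \cite{INPVW13}, developed there for reachability, to the weighted shortest-path setting. Their machinery lets us construct, in polynomial time and $O(\sspace)$ space, a recursive decomposition of $G$ into regions: applying the small-space separator construction repeatedly yields an $r$-division with $r=n^{1-2\epsilon}$ in which $G$ is split into $O(n^{2\epsilon})$ regions, each of size $O(n^{1-2\epsilon})$, whose total boundary (separator) size is $O(n/\sqrt r)=O(\sspace)$, which is within our space budget. The structural fact we will exploit is that any directed $s$--$t$ path must pass through a boundary vertex whenever it moves from one region to another.

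First I would add $s$ and $t$ to the boundary set and build an auxiliary weighted digraph $H$ on the boundary vertices: for every ordered pair $(u,v)$ of boundary vertices lying in a common region $R_i$, put an edge $u\to v$ whose weight is the length of a shortest directed path from $u$ to $v$ that stays inside $R_i$. Since $G$ has no negative-weight cycle, neither does any subgraph, so these within-region distances are well defined, and they are computed recursively by the same algorithm applied to $R_i$. Cutting an arbitrary $s$--$t$ path of $G$ at its successive boundary crossings shows that each segment lies within a single region and hence has length at least the corresponding edge weight of $H$; conversely every edge of $H$ is realised by a genuine path of $G$. Therefore the $s$--$t$ distance in $G$ equals the $s$--$t$ distance in $H$, and it remains to compute the latter. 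Because $H$ may contain negative edge weights but no negative cycle, I would run {\BellmanFord} on $H$; this needs only one distance label per boundary vertex, i.e.\ $O(\sspace)$ words, and since all weights are bounded by $n\cdot\mypoly(n)=\mypoly(n)$ every label fits in $O(\log n)$ bits and does not affect the asymptotic space.

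Crucially, we never store $H$ explicitly---it may have $\Theta(n)$ edges---so each time {\BellmanFord} relaxes an edge of $H$ we recompute its weight by recursing into the corresponding region. The space recurrence is then $S(n)=O(\sspace)+S(n^{1-2\epsilon})$, dominated by its top level, giving $O(\sspace)$ overall. The delicate point, and the step I expect to be the main obstacle, is the running time: recomputing edge weights on demand multiplies the work by a polynomial factor at each level of recursion, so a decomposition of logarithmic depth would only yield quasi-polynomial time. This is exactly why the boundary budget is spent to shrink the region size by a polynomial factor per level: the region sizes follow $n,\,n^{1-2\epsilon},\,n^{(1-2\epsilon)^2},\dots$, so after only $O(1/\epsilon)$ levels they drop below $\sspace$, at which point a standard linear-space {\BellmanFord} solves the base case directly within budget. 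With recursion depth a constant $O(1/\epsilon)$, the per-level polynomial blow-up compounds to $\mypoly(n)^{O(1/\epsilon)}=\mypoly(n)$, keeping the total time polynomial. Finally, to output an actual shortest path rather than just its length, I would reconstruct it from the computed distances by a divide-and-conquer traversal that recovers the boundary vertices visited and then recurses inside each region, all within the same time and space bounds.
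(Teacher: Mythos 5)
Your proposal is correct and follows essentially the same strategy as the paper: a constant-depth ($O(1/\epsilon)$) recursion on the $\tildeO(\ssize)$-size separator family of Imai et al., with within-region distances between separator vertices computed recursively and combined by Bellman--Ford-style relaxation over the separator vertices (the paper phrases your auxiliary graph $H$ as $|S|$ rounds of updates to a distance array $C_s$, which is the same computation with edge weights likewise recomputed on demand), a linear-space Bellman--Ford base case once regions shrink below $n^{1/2}$, and a recursive traversal to report the path itself.
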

We use the space efficient construction of separator for planar graphs \cite{INPVW13}, and this is one of the main building blocks for the results stated in this paper. Let the separator be $S$. Now calculate the shortest distance of every $v \in S$ from the vertex $s$. The shortest path from $s$ to $t$ must pass through the vertices in the separator and thus knowing the shortest path from $s$ to each of such vertex is enough to find the shortest path from $s$ to $t$. The shortest path from $s$ to any $v \in S$ can be found by applying the same shortest path algorithm recursively to each of the connected component of the graph induced by $V \setminus S$. As a base case we use {\BellmanFord} algorithm to find the shortest path. The recursive invocation of the above technique will lead to the time-space bound mentioned in the above theorem.\\
Another main contribution of this paper is to give an algorithm for the {\RB} problem in planar DAG. The main idea behind our algorithm is to use a modified version of DFS algorithm along with the planar separator.
\begin{theorem}
 \label{thm:redbluepath}
 For any constant $ 0 < \epsilon < \frac{1}{2} $, there is an algorithm that solves {\RB} problem in planar DAG in polynomial time and $ O(n^{\frac{1}{2}+\epsilon}) $ space.
\end{theorem}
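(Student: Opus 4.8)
The plan is to reduce the alternating-reachability question to ordinary reachability in an auxiliary \emph{state graph} $H$ that records the colour of the last edge traversed, and then to solve reachability in $H$ by the same separator-based divide-and-conquer that drives Theorem~\ref{thm:shortpath}. I would take $V(H)=V(G)\times\{R,B\}$ and, for every red edge $(u,v)$ of $G$, add an arc $(u,B)\to(v,R)$, and for every blue edge $(u,v)$ add an arc $(u,R)\to(v,B)$; a distinguished source is joined to $(w,c)$ whenever $(s,w)$ is an edge of colour $c$, so that the first edge on a path is unconstrained. A directed walk in $H$ from the source to some copy $(t,\cdot)$ is precisely a red--blue alternating walk from $s$ to $t$ in $G$. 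The point that makes this reduction valid---and that is special to DAGs---is that in a DAG every walk is a simple path, since a repeated vertex would force a directed cycle; thus an alternating walk exists iff an alternating path exists, and it suffices to decide whether $(t,R)$ or $(t,B)$ is reachable from the source in $H$. Observe moreover that $H$ is itself a DAG, as any cycle in $H$ would project onto a closed walk in $G$.

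Next I would show that, although $H$ need not be planar, it inherits the good separators of $G$. If $S$ is a balanced planar separator of $G$ with $|S|=O(\sqrt n)$, then $S\times\{R,B\}$ is a balanced separator of $H$ of size $2|S|=O(\sqrt n)$: every arc of $H$ projects to an edge of $G$, so after deleting both copies of each vertex of $S$ the remaining pieces of $H$ project into the components of $G\setminus S$, each of which has at most $\tfrac23|V(G)|$ vertices, whence each piece of $H$ has at most $\tfrac23|V(H)|$ vertices. This separator is also constructible within the space budget: I would run the space-efficient planar separator routine of \cite{INPVW13} on $G$ (and, inside the recursion, on the relevant planar component of $G$) and lift its output by attaching both colour bits. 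Hence $H$ is equipped with a recursively, space-efficiently constructible family of $O(\sqrt n)$-size balanced separators, obtained entirely through planar separators of $G$.

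With the separator family available I would run the recursive reachability procedure underlying \cite{INPVW13} on $H$: compute $S'=S\times\{R,B\}$, determine by recursion which states of $S'$ are reachable from the source, and propagate reachability across the components, the separator states serving as portals through which every source-to-$(t,\cdot)$ path must pass. Reachability inside a single component is answered by recursing on that component together with its colour-carrying portal vertices, and once a component is small enough the base case is an ordinary depth-first search that tracks the last-edge colour. Storing only the $O(\sqrt n)$ separator states and a visited set of the same size at each level, and recomputing portal-to-portal reachability on demand rather than storing it, gives $O(\sqrt n)$ space per level; accumulated over the recursion this yields $O(\sspace)$ once the recursion depth and the separator-construction overhead are absorbed into $\epsilon$, while the running time remains polynomial because each portal query is a polynomial-time recursive call and there are only logarithmically many levels.

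The main obstacle, I expect, is the reachability engine rather than the reduction. Since $H$ is not planar, the algorithm of \cite{INPVW13} cannot be invoked as a black box, so I would have to open it up and verify that its correctness and its time and $O(\sspace)$ space bounds rely only on the availability of a recursively and space-efficiently constructible family of balanced $O(\sqrt n)$-separators, and never on planarity itself; feeding it the lifted separators of $G$ would then suffice. A second delicate point is the interplay between the colour constraint and the component summaries: when a component is replaced by reachability relations among its boundary vertices, these relations must be recorded between coloured states (entered via red versus entered via blue), because the admissible continuation outside the component depends on the colour of the last edge. Working throughout in $H$ rather than in $G$ is exactly what keeps this bookkeeping automatic.
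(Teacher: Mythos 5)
Your reduction to reachability in the product graph $H=V(G)\times\{R,B\}$ is sound and is, in essence, what the paper does: its algorithm {\tt ModifiedColoredDFS} maintains two visited sets $R_0,R_1$ and answers queries of the form ``is there an alternating path from $u$ to $v$ with prescribed first and last colour values'', which is exactly a search on your $H$ restricted to the separator vertices; your observation that separators of $G$ lift to separators of $H$ is the reason the paper can get away with computing separators only of (planar pieces of) $G$. One minor correction: the problem as defined requires the first edge to be Red and the last to be Blue, so the source should be joined only to $(w,R)$ for red edges $(s,w)$ and you should test reachability of $(t,B)$ alone; your ``unconstrained first edge'' version answers a different question. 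This is a one-line fix in your construction.

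The substantive gap is the time bound. You drive the recursion with a single balanced separator of size $O(\sqrt n)$, so the component size drops only by a constant factor per level and you get $\Theta(\log n)$ levels of recursion. But each level issues polynomially many recursive portal-to-portal queries (and recomputes them on demand, as it must to stay within the space budget), so the running time satisfies $T(n)\le p(n)\cdot T(cn)$ for a constant $c<1$, which unrolls to $n^{O(\log n)}$ --- quasi-polynomial, not polynomial. This is precisely the trap that Savitch-style recursion falls into, and it is why the paper (following \cite{INPVW13}) instead uses the $n^{1-\epsilon}$-\emph{separator family} of Theorem~\ref{thm:separator}(b): a set of $O(n^{1/2+\epsilon})$ vertices whose removal leaves components of size at most $n^{1-\epsilon}$. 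With that choice the instance size drops from $n'$ to roughly $n'^{1-\epsilon}$ per level, the recursion bottoms out at size $n^{1/2}$ after only $O(1/\epsilon)$ levels, and polynomial branching over constantly many levels gives time $n^{O(1/\epsilon)}$. Substituting the separator family for your balanced separator (and lifting it to $H$ exactly as you describe) repairs the argument; the space analysis is unaffected.
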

Now using the reduction given in \cite{Kul09} and the algorithm stated in the above theorem, we get an algorithm to solve the directed reachability problem for a fairly large class of graphs described in Section \ref{sec:redblue}, that takes polynomial time and $O(\sspace)$ space. Thus we can able to beat the bound given by Barnes, Buss, Ruzzo and Schieber \cite{BBRS92} for such class of graphs.\\
In this paper, we also establish a relation between {\EB} problem in a planar DAG and the problem of finding odd length cycle in a directed planar graph and thus we argue that both this problem has the same simultaneous time-space complexity. We use two colors Red and Blue to color the vertices of the given graph and then use the color assigned to the vertices of the separator to detect the odd length cycle. The conflicting assignment of color to the same vertex in the separator will lead to the presence of an odd length cycle. Here also we use the recursive approach to color the vertices and as a base case we use the well known {\tt BFS} algorithm to solve the problem of detecting odd length cycle in each small component. Thus we get the following theorem regarding solving {\EB} problem.
\begin{theorem}
 \label{thm:evenpath}
 For any constant $ 0 < \epsilon < \frac{1}{2} $, there is an algorithm that solves {\EB} problem in planar DAG in polynomial time and $ O(n^{\frac{1}{2}+\epsilon}) $ space.
\end{theorem}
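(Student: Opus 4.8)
The plan is to reduce the {\EB} problem on a planar DAG to the problem of detecting an odd-length directed cycle in a directed planar graph, and then to solve the latter by the same separator-based divide-and-conquer used for Theorem~\ref{thm:shortpath}, with the shortest-distance bookkeeping replaced by a parity (Red/Blue) bookkeeping. For the reduction, given the instance $(G,s,t)$ I would form $G'$ by adding the single arc $t\to s$. Because $G$ is acyclic, every directed cycle of $G'$ must use this new arc exactly once, and a cycle of length $\ell$ corresponds to an $s$--$t$ path of length $\ell-1$ in $G$; hence $G'$ contains an odd directed cycle if and only if $G$ contains an even-length $s$--$t$ path. Equivalently, writing $H$ for the parity product that keeps two copies $v_0,v_1$ of each vertex and sends each arc $(u,v)$ to $(u_0,v_1)$ and $(u_1,v_0)$, an even $s$--$t$ path is exactly a directed path from $s_0$ to $t_0$ in $H$, and $H$ is again acyclic. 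If $H$ (or $G'$) were guaranteed planar this would follow immediately from the reachability algorithm of \cite{INPVW13}; the point of the separator argument below is to run this parity reachability while working entirely inside the planar embedding of $G$.

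The core of the algorithm is a recursion identical in shape to the one sketched for Theorem~\ref{thm:shortpath}. I would first invoke the space-efficient planar separator construction of \cite{INPVW13} to obtain a small separator $S$ of $G$, then try to two-color the vertices so that consecutive vertices along a directed path receive opposite colors, interpreting the color of a vertex as the parity of the length of a directed path reaching it from $s$. The separator vertices are colored first: for each $v\in S$ I record which parities of $s$-to-$v$ path lengths are achievable, where the relevant $s$--$v$ paths are decomposed at the points where they cross $S$, so that the intra-component parities are computed by recursing on the connected components of the graph induced by $V\setminus S$. This reduces the global computation to building a small ``parity graph'' on $S\cup\{s,t\}$, whose arcs are labelled by the even/odd reachabilities established inside the components, and then checking this reduced instance for a parity conflict. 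A vertex of $S$ that is forced to be both Red and Blue witnesses an odd directed cycle, equivalently an even $s$--$t$ path, so the answer is ``yes'' precisely when such a conflict appears at some level of the recursion.

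At the base of the recursion the components are small enough to be handled directly, where I would run {\algoBFS}: a breadth-first pass computes, for each vertex of the component, the set of achievable parities from each entry point and flags any parity conflict, which is exactly the standard bipartiteness/odd-cycle test specialized to the directed, parity-labelled setting. Because the separator has size roughly $n^{1/2}$ and the recursion and component sizes behave exactly as in Theorem~\ref{thm:shortpath}, the same accounting yields polynomial running time and $O(\sspace)$ space for every constant $0<\epsilon<\tfrac12$.

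The step I expect to be the main obstacle is the parity bookkeeping across separator levels. Unlike plain reachability, where each separator vertex carries a single reachable/unreachable bit, here a vertex of a DAG can be reached by paths of both parities without any cycle being present, so each separator vertex must carry two bits (even-reachable and odd-reachable) and these must be composed correctly as paths are glued across $S$; I must argue that this doubling does not inflate the space beyond $O(\sspace)$ and that the composition faithfully tracks directed (not merely undirected) cycle parity, so that the conflicts detected are genuinely odd directed cycles of $G'$ and not spurious odd cycles of the underlying undirected graph. Verifying that the arc $t\to s$ can be added without destroying the planar embedding used by the separator construction---or, alternatively, that the parity recursion can be carried out on $G$ directly without ever materializing this arc---is the other delicate point that the proof must settle.
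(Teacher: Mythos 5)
Your high-level plan (turn an even $s$--$t$ path into an odd directed cycle, then detect odd cycles with a separator recursion) is the right shape, but the two points you defer to the end are exactly where the argument breaks, and neither is resolved. First, the reduction: adding the arc $t\to s$ does \emph{not} in general preserve planarity ($s$ and $t$ need not lie on a common face), and the parity-product graph $H$ is likewise not planar in general, so the separator machinery of \cite{INPVW13} cannot be applied to either auxiliary graph. The paper's proof closes this gap with a different reduction: it first computes the shortest $s$--$t$ path $P$ using Theorem~\ref{thm:shortpath}; if $P$ has even length it accepts, and otherwise it \emph{reverses the edges of $P$ in place} to obtain a graph $G'$ on the same edge set, which is automatically planar, and proves that $G$ has an even $s$--$t$ path if and only if $G'$ has an odd directed cycle (the parity bookkeeping works because $P$ is known to be odd). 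Your construction has no analogue of this step, and without it the recursion has no planar graph to run on.

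Second, the conflict-detection step is not sound as stated. A planar DAG has no directed cycles at all, so a separator vertex reachable from $s$ by paths of both parities witnesses nothing; conversely, the ``conflict'' you want to detect lives in a graph with cycles, and there a two-coloring conflict only certifies an odd cycle in the \emph{underlying undirected} graph, which for a general digraph does not imply an odd \emph{directed} cycle. The paper handles this by first decomposing $G'$ into strongly connected components (via polynomially many calls to the planar reachability algorithm) and invoking Observation~\ref{lem:oddundirected}, that a strongly connected digraph has an odd cycle iff its undirected version does; only then does the BFS two-coloring with separator recursion (Proposition~\ref{thm:oddccycle}) apply. Your alternative of carrying two reachability bits per separator vertex and composing them across components is in principle a workable route that would bypass the odd-cycle detour entirely (it is a Bellman--Ford-style fixpoint over parity sets, with acceptance meaning ``even'' lies in the parity set of $t$, not the occurrence of a conflict), but you do not carry it through: you would still need to argue the decomposition of even paths at separator vertices, bound the number of update rounds, and keep the accounting within $O(\sspace)$ space, none of which is done. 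As written, the proposal identifies the obstacles correctly but supplies neither the paper's fix nor a complete alternative.
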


Our another contribution is to give an time-space efficient algorithm for perfect matching problem in case of planar bipartite graphs.
\begin{theorem}
 \label{thm:perfectmatching}
 In Planar Bipartite Graphs, for any constant $ 0 < \epsilon < \frac{1}{2} $,\\
 (a) {\PM} (Decision + Construction) can be solved in polynomial time and $ O(n^{\frac{1}{2}+\epsilon}) $ space.\\
 (b) {\HO} (Decision + Construction) can be solved in polynomial time and $ O(n^{\frac{1}{2}+\epsilon}) $ space.
\end{theorem}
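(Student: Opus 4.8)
The plan is to reduce both problems, for a planar bipartite graph $G=(U\cup V,E)$ with $|U|=|V|=n$, to a polynomial number of shortest-distance and reachability queries on a \emph{directed planar} graph derived from $G$, and then to answer each query deterministically within the stated budget by invoking Theorem~\ref{thm:shortpath} (together with the planar reachability algorithm of \cite{INPVW13}). The starting point is the reduction of \cite{DGKT12}, which places \PM\ in \UL\ and \HO\ in \coUL\ and \NL\ for planar bipartite graphs. The key observation is that the only genuinely nondeterministic ingredient in that reduction is the evaluation of reachability and minimum-weight distances in a directed planar graph, and these are exactly the primitives that admit deterministic polynomial-time, $O(\sspace)$-space algorithms. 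I would therefore keep the combinatorial skeleton of \cite{DGKT12} intact and replace its distance/reachability oracle by our algorithm, thereby trading the logarithmic-space \UL\ bound (whose deterministic simulation is not known to run in polynomial time) for a deterministic simultaneous time--space bound.

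First I would fix an isolating weight function $w:E\to\{1,\dots,\mypoly(n)\}$ giving every cycle of $G$ nonzero circulation; for planar graphs such a $w$ is obtained combinatorially from the planar embedding and is computable in $O(\sspace)$ space, and by the standard isolation argument it makes the minimum-weight perfect matching $M^\ast$ unique whenever a perfect matching exists. Using uniqueness, an edge $e=(u,v)$ lies in $M^\ast$ iff $w(e)+W(G\setminus\{u,v\})=W(G)$, where $W(\cdot)$ is the weight of the minimum-weight perfect matching. Following \cite{DGKT12}, the value $W(\cdot)$ and the associated membership tests reduce to shortest-distance computations in a directed planar graph obtained from $G$ (by orienting the bipartition and, where needed, contracting a reference set of edges, which preserves planarity since the class of planar graphs is closed under edge contraction). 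Each required distance is computed by Theorem~\ref{thm:shortpath}; there are only $O(|E|)=O(n)$ queries, each reusing the same $O(\sspace)$ work space. A perfect matching is reported edge by edge on the write-only output tape, so its $\Theta(n\log n)$-bit description never resides in work memory, and inconsistency of the membership tests is reported as ``no perfect matching,'' settling the decision version.

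For part (b), when \PM\ decides ``no,'' a Hall obstacle $S\subseteq U$ with $|N(S)|<|S|$ is obtained from a maximum matching by taking the vertices of $U$ reachable from an exposed vertex along alternating paths; by König's theorem this $S$ is a genuine obstacle. Alternating reachability is ordinary reachability in the directed planar graph that orients matched and unmatched edges in opposite directions, so $S$ can be enumerated on the output tape using the deterministic planar reachability procedure of \cite{INPVW13} within $O(\sspace)$ space, and the decision version of \HO\ is the complement, handled the same way.

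The main obstacle is bookkeeping rather than a new idea: I must verify (i) that every auxiliary directed graph handed to the distance/reachability routine is planar and carries polynomially bounded weights, so that Theorem~\ref{thm:shortpath} and \cite{INPVW13} actually apply, and (ii) that everything shared across the $\mypoly(n)$ queries---the weight function, the current query index, and accumulated distance values---fits in $O(\sspace)$ space, with the large objects (the matching and the obstacle set) streamed to the output tape rather than stored. Ensuring that the reduction of \cite{DGKT12} produces planar instances without inflating the weights, and that the reference matching needed for the construction version can itself be produced within budget, are the points that will require the most care.
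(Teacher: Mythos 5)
Your overall strategy differs from the paper's: the paper implements Miller and Naor's flow-based algorithm, reducing \PM{} (Decision) to negative-weight-cycle detection in the \emph{planar dual} $G^*$ with respect to the weights $c-f'$ (Corollary~\ref{cor:negcycle}), recovering the matching from the flow $f=f'+f''$, where $f''(u^*,v^*)=dist^w_{G^*}(s^*,v^*)-dist^w_{G^*}(s^*,u^*)$ is given by dual shortest distances (Theorem~\ref{thm:shortpath}), and reading off a Hall obstacle directly from a negative cycle in the dual. Your isolation-plus-membership-test plan is a genuinely different route, but it has a gap at its computational core: you never justify how to compute $W(G)$, the weight of a minimum-weight perfect matching, or $W(G\setminus\{u,v\})$, in polynomial time and $O(\sspace)$ space. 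The graph you describe (``orienting the bipartition'') has no directed path of length greater than one, so its shortest distances carry no information about $W$; and the variant in which one contracts ``a reference set of edges'' presupposes a reference perfect matching, which is circular (that is what you are trying to construct), cannot be stored in $O(\sspace)$ space, and in any case relates $W(G)$ to a minimum-weight collection of disjoint alternating cycles, i.e.\ a min-cost circulation, not to a single $s$--$t$ distance. The only known way to convert that circulation problem into planar shortest-path queries is precisely the passage to the planar dual that the paper (following Miller--Naor and \cite{DGKT12}) carries out and that your sketch omits.

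Part (b) has a second gap: when $G$ has no perfect matching, your K\"onig-based construction needs a \emph{maximum} matching $M$ and then alternating reachability from an exposed vertex, but your part (a) machinery only addresses perfect matchings, $M$ cannot be held in $O(\sspace)$ work space, and you do not explain how the reachability routine answers ``is $e\in M$?'' queries consistently without storing $M$. The paper sidesteps all of this: the negative-weight cycle in the dual with respect to $cn^4-f'$, found via Corollary~\ref{cor:negcycle}, bounds a cut $(A_1\cup B_1,\,A_2\cup B_2)$ with $N(A_1)\subseteq B_1$ and $|A_1|>|B_1|$, so $A_1$ itself is the Hall obstacle and no matching ever needs to be computed.
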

We build on the Miller and Noar's algorithm \cite{MN89} for perfect matching in planar bipartite graph. We show that this algorithm takes polynomial time and $ O(n^{\frac{1}{2}+\epsilon}) $ space as the only hard part of this algorithm is to find the shortest distance. We also argue that problem of finding Hall obstacle is directly associated with the problem of finding negative weight cycle and thus get same simultaneous time-space bound for this problem as of the problem of detecting negative weight cycle.\\
Next we show that the complexity of even perfect matching in planar bipartite graph is same as the perfect matching problem in planar bipartite graph and deciding the presence of odd length cycle in directed planar graph. Thus we get the following theorem for {\EPM} problem.
\begin{theorem}
 \label{thm:evenPM}
 For any constant $ 0 < \epsilon < \frac{1}{2} $, there exists an algorithm that solves {\EPM} in planar bipartite graphs in polynomial time and $ O(n^{\frac{1}{2}+\epsilon}) $ space.
\end{theorem}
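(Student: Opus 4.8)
The plan is to reduce \EPM\ to the two problems the paper has already placed within the desired time--space budget: \PM\ in planar bipartite graphs (Theorem~\ref{thm:perfectmatching}) and the detection of an odd-length cycle in a directed planar graph (the primitive underlying Theorem~\ref{thm:evenpath}). First I would invoke Theorem~\ref{thm:perfectmatching}(a) to obtain a perfect matching $M$ of the input planar bipartite graph $G=(A\cup B,E)$. Iterating over the edges of $G$ and querying, for each one, its membership in $M$ together with its colour, I can compute the parity of the number of red edges of $M$ using a single parity bit on top of the matching construction. If this parity is even, then $M$ itself witnesses a YES instance and I am done.

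The interesting case is when $M$ has an odd number of red edges. Here I would use the classical fact that any other perfect matching $M'$ satisfies that $M\triangle M'$ is a vertex-disjoint union of simple $M$-alternating cycles, and that switching $M$ along such a cycle $C$ changes the number of red edges by $(\text{number of red edges of } C)$ modulo $2$. Consequently an even-red perfect matching exists if and only if $G$ contains a simple $M$-alternating cycle carrying an odd number of red edges. To turn this into a cycle-detection question I orient every non-matching edge from $A$ to $B$ and every matching edge from $B$ to $A$; because $G$ is bipartite and $M$ is perfect, the directed cycles of the resulting digraph are exactly the simple $M$-alternating cycles of $G$, and orienting edges does not affect planarity. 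Finally I replace each blue edge by a directed path of length two while keeping each red edge of length one, so that the length of every directed cycle is congruent modulo $2$ to its number of red edges; this subdivision is again planarity preserving. The task has now become exactly the detection of an odd-length cycle in a directed planar graph, which lies within polynomial time and $O(\sspace)$ space by the primitive of Theorem~\ref{thm:evenpath}.

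The main obstacle is space rather than correctness: both $M$ and the subdivided auxiliary digraph have size $\Theta(n)$, so neither may be stored explicitly within an $O(\sspace)$ budget. I would therefore never materialise $M$ or the digraph; instead I would hand the odd-cycle routine \emph{implicit} (oracle) access to the auxiliary graph, answering each adjacency query on the fly by recomputing the relevant matching-membership bit with the $O(\sspace)$-space construction of Theorem~\ref{thm:perfectmatching}(a) and reading off the edge colour and the inserted subdivision vertices as needed. Since the odd-cycle routine and each oracle call both run in polynomial time and sublinear space, and no intermediate object of super-sublinear size is ever committed to the work tape, the composition remains in polynomial time and $O(\sspace)$ space. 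The delicate points to verify are that membership of an arbitrary queried edge in the constructed matching can indeed be decided within the budget (not merely that the matching can be listed once to the output), and that all the parity bookkeeping through the orientation and the subdivision is carried out implicitly, without ever recording an alternating cycle explicitly.
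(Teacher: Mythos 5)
Your proposal is correct and follows essentially the same route as the paper: construct a perfect matching $M$ via Theorem~\ref{thm:perfectmatching}(a), and if it has an odd number of red edges, reduce the existence of an even-red perfect matching to detecting an odd-length cycle in a planar digraph that encodes the $M$-alternating cycles (the paper contracts the matched edges and gives weight $1$ exactly to colour-disagreeing matched/unmatched pairs before subdividing the weight-$0$ edges, whereas you orient the edges and subdivide the blue ones --- the two gadgets are equivalent parity encodings). The one place you go beyond the paper is in insisting on implicit oracle access to $M$ rather than storing it; the paper silently elides this, and your observation that membership of a queried edge in $M$ can be recomputed on demand within the space budget via the local flow formula of the Miller--Naor construction is correct and is in fact needed for the stated bound.
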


The rest of the paper is organized as follows. In the next section, we give some notations and definitions used in this paper. In Section \ref{sec:shortestpath}, we give an algorithm for shortest path problem in directed planar graphs. In Section \ref{sec:redblue}, we give a simultaneous time-space bound for deciding the presence of Red-Blue Path in a planar DAG and then establish a relation between the problem of deciding the presence of an odd length cycle in directed planar graphs with the problem of deciding the presence of even path between two given vertices in planar DAG and thus give the same simultaneous time-space bound for both of these problems. And finally in Section \ref{sec:matching}, we discuss the simultaneous time-space bound of some matching problems in planar bipartite graphs.

\section{Preliminaries}
\label{sec:prelims}
\subsection{Notations}
A graph $ G=(V,E)$ consists of a set of vertices $V$ and a set of edges $ E $ where each edge can be represented as an ordered pair $(u,v)$ in case of directed graph and as an unordered pair $\{u,v\}$ in case of undirected graph, such that $ u,v \in V $. Unless it is specified, in this paper $G$ will denote the directed graph, where $|V|=n$. Given an graph $G$ and a set of vertices $X$, $G[X]$ denotes the subgraph of $G$ induced by $X$ and $ V(G) $ denotes the set of vertices present in the graph $ G $.

\subsection{Separator and Directed Planar Reachability}
The notions of separator and separator family defined below are crucial in this paper.
\begin{definition}
A subset of vertices $S$ of an undirected graph $G$ is said to be a $ \rho $-separator (for any constant $ \rho $, $ 0 < \rho < 1 $) if removal of $S$ disconnects $G$ into two sub-graphs $A$ and $B$ such that $ |A|,|B| \leq \rho n $ and the size of the separator is the number of vertices in $S$.

A subset of vertices $ \overline{S} $ of an undirected graph $G$ with $n$ vertices is said to be a $r(n)$-separator family if the removal of $ \overline{S} $ disconnects $G$ into sub-graphs containing at most $r(n)$ vertices.
\end{definition}

Now we restate the results and the main tools used in \cite{INPVW13} to solve directed planar reachability problem and these results are extensively used in this paper. In \cite{INPVW13}, the authors construct a $ \frac{8}{9} $-separator for a given undirected planar graph.

\begin{theorem}[\cite{INPVW13}]
\label{thm:separator}
(a) Given an undirected planar graph $G$ with $n$ vertices, there is an algorithm {\PlSep} that outputs a $ \frac{8}{9} $-separator of $G$ in polynomial time and $\tildeO (\sqrt{n}) $ space.\\
(b) For any $ 0 < \epsilon < 1/2 $, there is an algorithm {\PlSepFam} that takes an undirected planar graph as input and outputs a $ n^{1- \epsilon} $-separator family of size $ O(\ssize) $ in polynomial time and $ \tildeO(\sspace) $ space.
\end{theorem}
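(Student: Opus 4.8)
The plan is to realise \PlSep\ as a space-bounded implementation of the classical Lipton--Tarjan cycle-separator construction, and then to obtain \PlSepFam\ (part (b)) by recursing part (a). For part (a), I would first fix a combinatorial planar embedding of $G$; since a planar embedding is computable in logarithmic space, this costs only $\tildeO(\log n)$ space and may be recomputed on demand. The core of the Lipton--Tarjan argument is that a planar graph carrying a spanning tree of radius $r$ admits a fundamental cycle (a tree path closed by a single non-tree edge) whose removal is $\frac{2}{3}$-balanced and whose length is at most $2r+1$. To keep $r$ small I would BFS-layer $G$ from a root $s$ into levels $L_0,L_1,\dots$, locate the median level $\ell_1$, and select levels $\ell_0\le\ell_1\le\ell_2$ with $\ell_2-\ell_0=O(\sqrt n)$ and $|L_{\ell_0}|,|L_{\ell_2}|\le\sqrt n$; contracting everything above $L_{\ell_0}$ to a single super-vertex and discarding everything below $L_{\ell_2}$ leaves a graph of radius $O(\sqrt n)$, whose fundamental cycles thus have length $O(\sqrt n)$. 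I would then iterate over the $O(n)$ non-tree edges of this middle slab, and for each candidate cycle $C$ test its balance by deleting $C$ and counting, via undirected $st$-connectivity (in \L\ by Reingold), the vertices on each side of the Jordan curve $C$; relaxing the target from $\frac{2}{3}$ to $\frac{8}{9}$ gives enough slack to guarantee a balanced cycle exists. The separator returned is $C\cup L_{\ell_0}\cup L_{\ell_2}$, of size $O(\sqrt n)$.

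For the space accounting, the only object of size $\Theta(\sqrt n)$ that must be held in memory is the candidate cycle together with the two boundary levels; everything else --- the embedding, the connectivity queries used for balance testing, and the identity of the active component --- is recomputed on the fly in $\tildeO(\log n)$ space, so the total working space is $\tildeO(\sqrt n)$. I expect the single genuinely delicate ingredient to be computing the BFS level (distance from $s$) of a vertex, and the level sizes $|L_i|$, within $\tildeO(\sqrt n)$ space: exact single-source distance is not known to be easy in sublinear space, so the BFS tree cannot simply be stored. This is the crux, and I would attack it by exploiting the slab structure itself --- within a band of $O(\sqrt n)$ consecutive levels every relevant distance is $O(\sqrt n)$, so a level label can be recertified by a bounded search and recomputed rather than stored. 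Keeping the frontier of this bounded search inside $\tildeO(\sqrt n)$ space while preserving polynomial running time is the technical heart of the argument.

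Given part (a), I would build the $n^{1-\epsilon}$-separator family by recursively applying \PlSep: split $G$ with a $\frac{8}{9}$-separator, recurse into each of the two resulting pieces, and halt whenever a piece has at most $n^{1-\epsilon}$ vertices; the output family is the union of all separators produced. For the size bound, at recursion depth $i$ every piece has at most $(8/9)^i n$ vertices and the pieces partition $V$, so by concavity the total separator length created at depth $i$ is $O(\sqrt n\,(9/8)^{i/2})$; summing over the $O(\log n)$ depths needed to reach size $n^{1-\epsilon}$ (where $(9/8)^i=n^{\epsilon}$), the geometric series is dominated by its last term $O(\sqrt n\cdot n^{\epsilon/2})=O(n^{1/2+\epsilon/2})$, comfortably within the promised $O(\ssize)$. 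For the space bound I would traverse the recursion tree depth-first rather than materialise it: at any moment I store only the separators along the current root-to-leaf path, whose sizes shrink geometrically and hence total $\tildeO(\sqrt n)$, one representative vertex per level to name the active component (identified again by logspace connectivity in $G$ with the stored separators removed), and the $\tildeO(\sqrt n)$ working space of a single reused call to \PlSep. This keeps the whole computation within $\tildeO(\sspace)$, as required.
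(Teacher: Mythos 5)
First, a point of order: this paper does not prove Theorem~\ref{thm:separator} at all --- it is imported as a black box from \cite{INPVW13} (the surrounding text says explicitly that the authors ``restate the results and the main tools'' of that work). So there is no in-paper proof to compare against; your attempt has to be measured against the cited construction and against consistency with the rest of this paper.

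Measured that way, part (a) of your proposal has a genuine gap, located exactly where you flag it, and flagging it does not defuse it. Every step of your Lipton--Tarjan plan before the fundamental-cycle search --- locating the median level $\ell_1$, finding the sparse levels $L_{\ell_0}$ and $L_{\ell_2}$, contracting the slab --- requires the BFS level of, and counts over, all $n$ vertices, i.e.\ unweighted single-source distances in a planar graph. No polynomial-time, $\tildeO(\sqrt{n})$-space algorithm for this was known independently of the separator: inside this very paper, sublinear-space planar shortest paths is the \emph{main} result (Theorem~\ref{thm:shortpath}), and its proof consumes Theorem~\ref{thm:separator} as its principal building block, so completing your argument ``via BFS'' is circular. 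Your proposed repair --- recertifying a level label by a bounded search inside a band of $O(\sqrt{n})$ consecutive levels --- fails quantitatively: such a band can contain $\Theta(n)$ vertices (the first $\sqrt{n}$ levels of a $\sqrt{n}\times\sqrt{n}$ grid already do), so ``the frontier of this bounded search'' is linear-sized and cannot be stored within budget, while recomputing it on demand instead of storing it is the classic space-for-time trade and pushes the running time super-polynomial. Generic tools do not close the hole either: Savitch gives $O(\log^2 n)$ space but $n^{\Theta(\log n)}$ time, and \cite{BBRS92} gives polynomial time but only $n/2^{\Theta(\sqrt{\log n})}$ space. This obstruction is precisely why the cited construction is nontrivial: Imai et al.\ avoid global BFS layering altogether, implementing a cycle-separator computation in the style of Gazit and Miller's parallel algorithm out of primitives genuinely available in small space (canonical spanning trees, and connectivity and counting queries via Reingold's logspace connectivity), rather than Lipton--Tarjan's level decomposition.

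Part (b), by contrast, is essentially correct given (a), and matches how the family is obtained in \cite{INPVW13}: recurse the $\frac{8}{9}$-separator until every piece has size at most $n^{1-\epsilon}$. Two small repairs. Your per-depth bound $O(\sqrt{n}\,(9/8)^{i/2})$ tacitly assumes at most $(9/8)^i$ pieces at depth $i$, but the recursion can spawn up to $2^i$; the bound is rescued by the stopping rule --- any piece still being split has size exceeding $n^{1-\epsilon}$, so at most $n^{\epsilon}$ pieces are active at any depth, whence $\sum_j \sqrt{n_j} \le \sqrt{n^{1+\epsilon}} = n^{(1+\epsilon)/2}$ per depth and $\tildeO(n^{1/2+\epsilon/2})$ overall, within the claimed $O(\ssize)$. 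Also, the depth-first bookkeeping you describe is unnecessary: the entire output family has size $O(\ssize)$ while the space budget is $\tildeO(\sspace)$, so you may simply keep every separator produced so far in memory and identify the current component of the residual graph by logspace connectivity queries against the stored set.
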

In \cite{INPVW13}, the above lemma is used to obtain a new algorithm for reachability in directed planar graph.
\begin{theorem}[\cite{INPVW13}]
\label{thm:planar-reach}
For any constant $0 < \epsilon < 1/2$, there is an algorithm {\algoPlanarReach} that, given a directed planar graph $G$ and two vertices $s$ and $t$, decides whether there is a path from $s$ to $t$. This algorithm runs in time $n^{O(1/\epsilon)}$ and uses $O(n^{1/2+\epsilon})$ space, where $n$ is the number of vertices of $G$.
\end{theorem}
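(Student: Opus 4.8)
The plan is to combine the space-efficient separator family of Theorem~\ref{thm:separator}(b) with a recursive reachability test, reusing space along the recursion. First I would run {\PlSepFam} on the underlying undirected graph of $G$ to obtain an $n^{1-\epsilon}$-separator family $\famS$ of size $O(\ssize)$; this uses $\tildeO(\sspace)$ space, and I would keep $\famS$ resident (it fits the budget). Removing $\famS$ breaks $G$ into components each having at most $n^{1-\epsilon}$ vertices, and---crucially---this decomposition is computed on the undirected graph, so I must first check that every directed $s$--$t$ path is still forced through $\famS$ whenever it moves between distinct components. Since two vertices lying in different undirected components have no edge between them in either direction, every directed edge leaving a component must land in $\famS$, which supplies exactly the decomposition property I need.

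Second, I would reduce reachability in $G$ to reachability in a much smaller implicit graph $H$ on vertex set $\famS \cup \{s,t\}$, placing a directed edge $u \to v$ precisely when $v$ is reachable from $u$ along a directed path whose internal vertices all lie inside a single component of $G$ with $\famS$ removed. The decomposition property gives that $t$ is reachable from $s$ in $G$ iff $t$ is reachable from $s$ in $H$: any $s$--$t$ path splits, at its successive visits to $\famS$, into segments each internal to one component, and each such segment is exactly an edge of $H$. I would then search $H$ from $s$ using {\algoBFS}, storing only the frontier and the set of already-reached vertices, i.e. $O(\ssize)$ vertices. The edges of $H$ are never stored; whenever the search asks whether $u \to v \in H$, I answer by a recursive call of {\algoPlanarReach} on the relevant component (planar, of size $\le n^{1-\epsilon}$) with $u,v$ adjoined as source and sink. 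Component membership and enumeration reduce to undirected connectivity after deleting $\famS$, which lies in {\L} and is thus within budget.

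Third comes the resource analysis. For space, the outer {\algoBFS} reached-set costs $O(\ssize)$ and at any instant only one recursive call is live, yielding the recurrence $S(n) = O(\ssize) + S(n^{1-\epsilon})$. Since the depth-one term is $O\!\big((n^{1-\epsilon})^{1/2+\epsilon}\big)$ and $(1-\epsilon)(\tfrac12+\epsilon) < \tfrac12+\epsilon$, the terms decay geometrically and sum to $O(\ssize)$; the $\tildeO$ polylog overhead of the separator construction is absorbed by running the algorithm with a slightly smaller constant $\epsilon' < \epsilon$. For time, each invocation issues at most one edge-query per ordered pair of boundary vertices sharing a component, i.e. $n^{O(1)}$ queries, each spawning a size-$n^{1-\epsilon}$ subproblem that is recomputed rather than memoized; hence $T(n) \le n^{O(1)} \cdot T(n^{1-\epsilon})$. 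Taking logarithms, $\log T(n) \le c\log n \sum_{i \ge 0}(1-\epsilon)^i = \tfrac{c}{\epsilon}\log n$, so $T(n) = n^{O(1/\epsilon)}$, which is polynomial for constant $\epsilon$.

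I expect the space accounting to be the main obstacle. Along any single root-to-leaf chain of the recursion the live data are the separator family and reached-set at each level, of sizes $O(\ssize)$, $O((n^{1-\epsilon})^{1/2+\epsilon})$, and so on; because only one child call is active at a time, these sum over the chain rather than over the whole (large) recursion tree, and the geometric decay keeps the total $O(\ssize)$. I must therefore argue that the implementation maintains a single active recursion path---finishing one edge-query of $H$ before starting the next---and that component enumeration at each level is recomputed in {\L} rather than stored. Establishing that the recurrence is genuinely additive, $S(n)=O(\ssize)+S(n^{1-\epsilon})$ along the active path, rather than paying $O(\ssize)$ at every node of the recursion tree, is the crux of the space bound.
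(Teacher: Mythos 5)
Your proposal is correct and is essentially the argument of \cite{INPVW13} that the paper imports without reproof: compute the separator family via {\PlSepFam}, search an implicit auxiliary graph on $\famS \cup \{s,t\}$ by BFS whose edge queries are answered by recursive calls of {\algoPlanarReach} on single components, and close with the additive space recurrence $S(n)=O(\ssize)+S(n^{1-\epsilon})$ and multiplicative time recurrence giving $n^{O(1/\epsilon)}$. This is also exactly the divide-and-conquer template the paper itself instantiates for {\PlShortPath}, so there is nothing to flag beyond the single-active-recursion-path bookkeeping you already identified as the crux.
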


\section{Shortest Path Problem in Directed Planar Graph}
\label{sec:shortestpath}
Let {\ShP} be the problem of computing the shortest distance and the corresponding path between two vertices in a graph $G$. For a given graph $G=(V,E)$ with a weight function $w:E \rightarrow \mathbb{R}$ (negative weights are also allowed), and two vertices $s$ and $t$, let $ dist^w_{G}(s,t) $ denote the shortest distance and $path^w_{G}(s,t)$ denote the shortest path from $s$ to $t$. We will consider the weight assigned to an edge is bounded by some polynomial in $n$, where $n$ is the number of vertices in $G$.
Note that single-source shortest path problem and all-pair shortest path problem can be solved by executing the algorithm used for determining shortest path from $s$ to $t$, polynomially many times.

As a consequence of the above result, we can also detect negative weight cycle in a given directed planar graph. If we determine the shortest path from $s$ to all other vertices then negative cycle must lie in any one of these paths and shortest distance of that path is negative infinity. Now consider the {\ShP}  problem when the given graph $G$ is directed planar. We will also briefly discuss the procedure of detecting negative weight cycle for directed planar graphs, in the later part of this section.

\begin{proof}[Proof of Theorem \ref{thm:shortpath}]
Let $ G=(V,E) $ be the given directed planar graph where $|V|=n$. Consider any constant $0<\epsilon <1/2$. Let $ \overline{S} $ be the $ n^{(1-\epsilon)}  $-separator family computed by {\PlSepFam} on underlying undirected graph of $ G $ and $ S= \overline{S} \cup \{s,t\} $. Define an array $ C_{s} $ of size $|S|$. $ C_{s}[i] $ will store the distance of $i$-th vertex $v_i$ of the set $ S $ from the vertex $s$ and initially $ C_{s}[s] =0$, $ \forall _{v_i\ne s},  C_{s}[i] =\infty $. For calculating the shortest distance, we use {\BellmanFord} if the graph is small; otherwise, we calculate the shortest distance of the vertices in $S$ from $s$ through each connected component of $G[V\setminus S]$ and then use those distances to calculate the shortest distance from $s$ to $t$ in the overall graph. We do this recursively to achieve the desired time and space bound. To find the shortest distance between $s$ 
and $t$ in the 
given graph $G$, we run {\PlShortPath} (Algorithm \ref{algo:PlanarShortestPath}) with the input $(G,s,t,n,S,C_s)$, where $n$ is the number of vertices of $G$.\\
In Algorithm \ref{algo:PlanarShortestPath}, within the loop [9-15], we evaluate the shortest distance of all the vertices $v_i \in S$ from $s$ through each connected component of $G[V\setminus S]$ and by update $C_s$ (in line 13), we mean that update the entry in $C_s[i]$ for some $i$, if the currently calculated distance of $v_i$ from $s$ is smaller than the previously stored one. We run the loop [9-15] total $|S|$ number of times and the reason for that is mentioned during the correctness proof.

\begin{algorithm}[h]
\SetAlgoLined
  \Input{$ G'=(V',E'),s',t',n,T,A $}
 \Output{$dist^w_{G'}(s',t')$}
 //let $ r'=n'^{(1-\epsilon)} $, $|V'|=n'$\\
  \eIf{$ n' \le n^{\frac{1}{2}} $}{
   {Run \BellmanFord}($ G',s',t' $) using the values stored in $A$ and return the shortest distance\;
   }{
   
   Run {\PlSepFam} on the underlying undirected graph of $G'$ to compute $ r' $-separator family $ \overline{S'} $\;
   Set $ S' : = \overline{S'} \cup \{s',t'\} $\;
   Define an array $ C_{s'} $ of size $|S'|$. $ C_{s'}[i] $ will store $dist^w_{G'}(s',v_i)$, where $v_i$ is the $i$-th vertex of the set $ S' $ and set $ \forall _{v_i \in T}, C_{s'}[i]=A[i],  \forall _{v_i \not \in T}, C_{s'}[i]=\infty $\\
   \For{$ round=1 \: to \: |S'|  $}
   {
   \For{every $ x \in V' $}
    {
   	\For{every $ v \in S' $}
   	{
   	//$ V_x= $ the set of vertices of undirected version of $ G[V'\setminus S'] $'s connected component containing $x$.\\
   		Run {\PlShortPath}($ G[V_x \cup S'],s',v,n,S',C_{s'} $)\;
   		Update $ C_{s'} $\;
   	}
    }
   }
  }
   \caption{Algorithm {\PlShortPath}: Shortest Distance in Directed Planar Graph}
   \label{algo:PlanarShortestPath}
\end{algorithm}

We use {\PlShortPath} as a subroutine to report the shortest path. The algorithm is stated as {\PlShortReport} (Algorithm \ref{algo:ReportPlanarShortestPath}) and to report the shortest path between $s$ and $t$ in $G$, we run this algorithm with the input $(G,s,t,n,S,C_s)$.

\begin{algorithm}[h]
\SetAlgoLined
\Input{$ G'=(V',E'),s',t',n,T,A $}
\Output{$path^w_{G'}(s',t')$ in reverse order}
\BlankLine

Run {\PlShortPath}($ G',s',t',n,T,A $)\;
//let $ \overline{S'} $ be the corresponding separator family and $ S'=\overline{S'} \cup \{s',t'\} $ and $ C_{s'} $ be the corresponding array storing the shortest distances of vertices in separator family from $ s' $\\
//let $ N(t') $ be the set of neighbor vertices of the vertex $ t' $\\
Define an array $ C_{t'} $ of size $|S'|$. Initialize $ C_{t'}[t'] =0$, $ \forall _{v_i\ne s'},  C_{s}[i] =\infty $.\\
  \For{every $ \: x \in N(t') $}{
  	\For{every $ v \in \overline{S'} $}{
  	     //let $  G_{rev} $ be the graph with the same set of vertices as $G$ but the direction of the edges are reversed\\
   Run {\PlShortPath}($ G_{rev}[V_x \cup S'],t',v,n,S',C_{t'} $)\;
   }
  }
  Let $ v' \in \overline{S'} $ such that $dist^w_{G_{rev}[V_{x'} \cup S']}(t',v')+dist^w_{G'}(s',v')=dist^w_{G'}(s',t')$\\
    \eIf{$ |V_{x'} \cup S'| \le n^{\frac{1}{2}} $}{
     {Run \BellmanFord}($ G_{rev}[V_{x'} \cup S'],t',v' $) and report $path^w_{G_{rev}[V_{x'} \cup S']}(t',v')$\;
     }{
     Run {\PlShortReport}($ G_{rev}[V_{x'} \cup S'],t',v',n,S',C_{t'} $)\;
  }
  Reinitialize $A$ and Run {\PlShortReport}($ G',s',v',n,T,A $)\;

 \caption{Algorithm {\PlShortReport}: Report Shortest Path in Directed Planar Graph}
   \label{algo:ReportPlanarShortestPath}
\end{algorithm}

Let $ \mathcal{S} $ and $ \mathcal{T} $ denote its space and time complexity functions. Since $ (1-\epsilon)^k \le \frac{1}{2} $ for $ k=O(\frac{1}{\epsilon}) $, the depth of the recursion is $ O(\frac{1}{\epsilon}) $. Also, $ |V_x \cup S'| \le 2 n'^{(1-\epsilon)} $. This gives us the following recurrence relation:
\[ 
\mathcal{S}(n') = 
\begin{cases}
\text{\~{O}}(n'^{(\frac{1}{2}+\epsilon) })+\mathcal{S}(2n'^{(1-\epsilon)}) & \text{if } n' > n^{\frac{1}{2}}\\
\text{\~{O}}(n^{\frac{1}{2}}) & \text{otherwise}
\end{cases}
\]
Thus, $ \mathcal{S}(n) = O(\frac{1}{\epsilon}) \text{\~{O}}(\ssize) = \text{\~{O}}(\ssize) $.\\
For time analysis, we get the following recurrence relation:
\[
\mathcal{T}(n') = 
\begin{cases}
q(n) (p_1 (n') \mathcal{T}(2 n'^{(1-\epsilon)}) + p_2(n') ) & \text{if } n' > n^{\frac{1}{2}}\\
q(n) \text{\~{O}}(n^{\frac{1}{2}}) & \text{otherwise}
\end{cases}
\]
As the recursion depth is bounded by $ O(\frac{1}{\epsilon}) $ (a constant) and the subroutine {\PlShortPath} is called polynomial many times, we have $ \mathcal{T}(n)=p(n) $ for some polynomial $ p(n) $. Using an extension of the above idea, it can easily be seen that {\PlShortReport} also uses $ \tildeO(\sspace) $ space and polynomial time.

We next show using induction the correctness of the above algorithms. Let $ G'=(V',E'),s',t',n,S,C_s $ be an instance of {\PlShortPath}. When $ n' \le n^{\frac{1}{2}} $, the correct answer is given since it is just the execution of the {\BellmanFord} algorithm.
Now consider the shortest path $P$ from $s$ to $t$, which can be decomposed as $ s \xrightarrow{P_1} v_1 \xrightarrow{P_2} v_2 \cdots v_k \xrightarrow{P_k} t $, where each $P_i$, for $i\ne 1,k$, is the shortest path between $v_{i-1}$ and $v_i$ through a $ O(n'^{(1-\epsilon)}) $ sized connected region. Now by induction on $n$, we can say that $ s \xrightarrow{P_1} v_1 \xrightarrow{P_2} v_2 \cdots v_{i-1} \xrightarrow{P_i} v_i $ is the shortest 
path from $s$ to $v_i$. As the size of separator family is $ \tildeO(\ssize) $ and each path going from one $ O(n'^{(1-\epsilon)}) $ sized region to other must pass through a vertex in the separator family $\overline{S}$, so $ k \le |\overline{S}| $ and thus the execution of loop [9-15] in Algorithm \ref{algo:PlanarShortestPath} total $|S|$ number of times suffice to output the shortest distance between $s$ and $t$. The above argument is similar to the proof of correctness of {\BellmanFord} algorithm. Now as $ |\overline{S}| \le \tildeO(\ssize) $, the length of the shortest path from $s$ to $t$ is at most $ \tildeO(\ssize)O(n'^{(1-\epsilon)}) $.

Using the similar argument, it is not hard to see that {\PlShortReport} will correctly output the shortest path from $s$ to $t$ in the reverse order.
\end{proof}

Now if there exists a negative weight cycle in the given graph, then either that will be completely inside $ O(n^{(1-\epsilon)}) $ sized region or it must pass through at least two vertices of the separator family. To detect the negative weight cycle we use a slight modification of {\PlShortPath} algorithm and in the modified version we will run the {\PlShortPath} slightly more than $ |S| $ times and if in the last run any of the value of $ C_s[v] $ changes then we can infer that there is a negative weight cycle and which lies in the path of $s$ to $v$ and we can report the negative weight cycle just using the {\PlShortReport} algorithm. Thus the detecting and reporting negative weight cycle problem also have the same space and time (up to polynomial blow up) complexity as that of shortest path problem.
\begin{corollary}
 \label{cor:negcycle}
 For directed planar graphs, for any constant $ 0 < \epsilon < 1/2 $, there is an algorithm that solves the problem of detecting negative weight cycle in polynomial time and uses $ O(n^{\frac{1}{2}+\epsilon}) $ space, where n is the number of vertices of the given graph $G$.
\end{corollary}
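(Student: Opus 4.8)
The plan is to augment the shortest-path routine {\PlShortPath} so that, in addition to returning distances, it returns a flag indicating whether a negative weight cycle has been encountered, and to exploit the same Bellman-Ford-style counting argument that already underlies the correctness proof of Theorem \ref{thm:shortpath}. The starting observation is a dichotomy: a negative weight cycle in a subgraph $G'$ handled at some level of the recursion either lies entirely inside one connected component of $G'[V' \setminus S']$ (a \emph{local} cycle), or it visits at least two vertices of the separator family $\overline{S'}$ (a \emph{spanning} cycle). I would detect the two kinds of cycle by two different mechanisms and then glue them together by induction on the recursion depth, which is bounded by $O(1/\epsilon)$.

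For the base case, a region of size at most $n^{1/2}$ is processed directly by {\BellmanFord}, which already detects negative cycles in the standard way: run the edge-relaxation phase one round beyond the number of vertices and report a cycle if any tentative distance still decreases. For the inductive step, I treat the separator vertices $S'$ as the node set of a ``meta-graph'' whose edge weights are the shortest distances through the individual components $V_x$, exactly as computed by the inner double loop (lines 10--15). The correctness argument for Theorem \ref{thm:shortpath} shows that, in the absence of any negative cycle, $|S'|$ passes of this loop suffice for every $C_{s'}[v]$ to stabilize, precisely because a shortest path crosses at most $|S'|$ separator vertices. I would therefore run the loop $|S'|+1$ times instead of $|S'|$: if some entry $C_{s'}[v]$ still strictly decreases on the final pass, the meta-graph admits an arbitrarily improvable walk and hence a spanning negative cycle through $v$, which can then be exhibited with {\PlShortReport}. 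Local negative cycles are caught because the recursive call on $G[V_x \cup S']$ itself returns the flag, so the detection propagates upward level by level.

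The main obstacle I anticipate is arguing that the recursion behaves correctly even when a subregion does contain a negative cycle, since in that case the distances returned by the recursive call are no longer genuine shortest distances. The key point to nail down is that this does not cause a false negative: if the innermost region confining the cycle is small, the base-case {\BellmanFord} flags it; otherwise the cycle is spanning at the level where it is finally confined, and the extra $(|S'|+1)$-th pass detects the non-stabilization there. Because the recursion depth is a constant $O(1/\epsilon)$ and each level runs its loop only a bounded number of times, termination is guaranteed regardless of the presence of negative cycles, so no recursive call can diverge.

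Finally, the complexity bounds carry over unchanged. Replacing $|S'|$ by $|S'|+1$ in the loop bound multiplies the running time by a factor of $1 + 1/|S'|$ at each of the $O(1/\epsilon)$ levels, a negligible overhead, so the time stays polynomial; the space usage is identical to that of {\PlShortPath}, namely $\tildeO(\sspace)$, since only one additional boolean flag is maintained per recursive frame. This yields the claimed polynomial-time, $O(n^{\frac{1}{2}+\epsilon})$-space algorithm for detecting (and, via {\PlShortReport}, reporting) a negative weight cycle.
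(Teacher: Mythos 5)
Your proposal follows essentially the same route as the paper: the same dichotomy between a negative cycle confined to a small region versus one passing through at least two separator vertices, the same Bellman-Ford-style trick of running the relaxation loop one extra pass over the separator meta-graph and flagging any further decrease, and the same use of {\PlShortReport} to exhibit the cycle. Your write-up is in fact somewhat more careful than the paper's brief sketch (notably in addressing how the recursion behaves when a subregion already contains a negative cycle), but the underlying argument is identical.
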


\section{Red-Blue Path Problem}
\label{sec:redblue}
\subsection{Deciding Red-Blue Path in Planar DAG}
\label{subsec:redbluepath}
Given a directed graph $G$ with each edge colored with Red or Blue and two vertices $s$ and $t$, the {\RB} problem decides whether there exists a directed path from $s$ to $t$ that alternate between Red and Blue edges such that the first edge is Red and last one is Blue. The {\RB}  problem is a generalization of the reachability problem in graphs, however this problem is {\NL}-complete even when restricted to planar DAGs \cite{Kul09}. This makes it an interesting problem in the area of space bounded complexity as to the best of our knowledge, this is the only ``reachabililty-like'' problem in planar graphs that is hard for {\NL}. We will now give a proof of Theorem \ref{thm:redbluepath}, stated in the Introduction.

\begin{proof}[Proof of Theorem \ref{thm:redbluepath}]
Consider a planar DAG $G$. Let $ \overline{S} $ be the $ n^{(1-\epsilon)}  $-separator family computed by {\PlSepFam} on underlying undirected graph of $ G=(V,E) $ and $ S= \overline{S} \cup \{s,t\} $. Now we devise a process on the vertices of $S$, which is similar to normal DFS algorithm. For the sake of convenience, we associate two numerical values to the edge colors -- $0$ to Red and $1$ to Blue. We run {\tt RedBluePath} with the input $( G,s,t,n,0,1 )$ and if the returned value is true, then we say that there is a directed path from $s$ to $t$ such that the two consecutive edges are of different colors and the first edge is Red and last one is Blue. In Algorithm \ref{algo:ModifiedColoredDFS}, we use the notation $(u,v)\in^{(init,temp)} \overline{E'}$ to decide whether there is a path from $u$ to $v$ that starts with an edge of color value $init$ and ends with an edge of color value $temp$, alternating between Red and Blue edges.

\begin{algorithm}
 \SetAlgoLined
\Input{$ G'=(V',E'),s',t',init,final $}
\Output{``Yes'' if there is a valid red blue alternate path from $s'$ and $t'$ starts with $init$ and ends with $final$}
   //Use two sets- $N_i$, for $i=0,1$, to store all the vertices that have been explored with the color value $i$\\
    \If{$s' \not \in N_{init} $}{
    Add $s'$ in $N_{init}$\;
     \For{each edge $(s',v)\in E'$ of color value $init$}{
	\If{$v = t'$ and $init=final$}{
	  Return true\;
	}
	Run {\tt ColoredDFS}($ G',v,t',init+1(mod$ $2),final $)
      }
    }
   \caption{Algorithm {\tt ColoredDFS}: One of the Building Blocks of {\tt RedBluePath}}
\end{algorithm}

\begin{algorithm}
 \SetAlgoLined
 \Input{$ \overline{G'}=(\overline{V}',\overline{E'}),G',s',t',init,final $}
 \Output{``Yes'' if there is a valid red blue alternate path from $s'$ and $t'$ starts with $init$ and ends with $final$}
    
    //Use two sets- $R_i$, for $i=0,1$, to store all the vertices that have been explored with the color value $i$\\
   \If{$s' \not \in R_{init} $}{
    Add $s'$ in $R_{init}$\;
     \For{each $(s',v)\in^{(init,temp)} \overline{E'}$ for each $temp\in\{0,1\}$}{
	\If{$v = t'$ and $temp=final$}{
	  Return true\;
	}
	Run {\tt ModifiedColoredDFS}($ \overline{G}',G',v,t',temp+1 $ (mod $2$), $final$)
      }
    }
    //``$(u,v)\in^{(init,temp)} \overline{E'}$?'' query will be solved using the following procedure\\
    \For{every $ a \in V $}{
      //$V$ be the set of vertices of $G'$\\
      //$V_a=$ the set of vertices of undirected version of $ G[V\setminus \overline{V}'] $'s connected component containing $a$\\
      \If{{\tt RedBluePath}($G[V\setminus \overline{V}'],u,v,n,init,temp$) is true}{
	Return true for the query\;
      }
    }
    Return false for the query\;
    //End of the query procedure
   \caption{Algorithm {\tt ModifiedColoredDFS}: One of the Building Blocks of {\tt RedBluePath}}
   \label{algo:ModifiedColoredDFS}
\end{algorithm}

\begin{algorithm}
 \SetAlgoLined
\Input{$ G',s',t',n,init, final $}
\Output{``Yes'' if there is a valid red blue alternate path from $s'$ and $t'$ starts with $init$ and ends with $final$}
  \eIf{$ n' \le n^{\frac{1}{2}} $}{
   Run {\tt ColoredDFS}($ G',s',t',init, final $)\;
   }{
   //let $ r'=n'^{(1-\epsilon)} $\\
   Run {\PlSepFam} on the underlying undirected graph of $G'$ to compute $ r' $-separator family $ \overline{S'} $\;
   Run {\tt ModifiedColoredDFS}($ \overline{G'}=(\overline{S'}\cup \{s',t'\},\overline{E'}),G',s',t',init,final $)\;
  }
   
   \caption{Algorithm {\tt RedBluePath}: Algorithm for Red-Blue Path in planar DAG}
\end{algorithm}

The base case takes $\tildeO(n^{1/2})$ space and polynomial time. The sets $N_0$ and $N_1$ of algorithm {\tt ColoredDFS} only store all the vertices of the input graph and we run {\tt ColoredDFS} on a graph with $n^{1/2}$ many vertices and it visits all the edges of the input graph at most once which results in the polynomial time requirement. Now by doing the same analysis as that of {\PlShortPath}, it can be shown that {\tt RedBluePath} will take $ O(\sspace)$ space and polynomial time.\\
We now give a brief idea about the correctness of this algorithm. In the base case, we use similar technique as DFS just by alternatively exploring Red and Blue edges and thus this process gives us a path where two consecutive edges are of different colors. Otherwise, we also do a DFS like search by alternatively viewing Red and Blue edges and we do this search on the graph $H=(\overline{S'}\cup \{s,t\},\overline{E'})$. By this process, we decide on presence of a path in $H$ from $s$ to $t$ such that two consecutive edges are of different colors in $G$ and the edge coming out from $s$ is Red and the edge coming in at $t$ is Blue. This is enough as each path $P$ in $G$ must be broken down into the parts $P_1,P_2,\cdots ,P_k$ and each $P_i$ must be a sequence of edges that starts and ends at some vertices of $ \overline{S'}\cup \{s,t\} $ and also  alternates in color. We find each such $P_i$, just by considering each connected component of $ G(V' \setminus \overline{S'}) $ and repeating the same steps recursively. 
\end{proof}

Due to \cite{Kul09}, we know that the reachability problem in directed graphs reduces to {\RB} in planar DAG. The class of graphs for which this reduction results in increase the number of vertices less than a quadratic factor of the number of vertices of the original graph, we have an algorithm for reachability problem that takes sublinear space and polynomial time. As a special case of this we can state the following theorem.
\begin{theorem}
  \label{thm:sparsereach}
 Given a directed acyclic graph $G=(V,E)$, where $|E|=\tildeO(n)$, with a drawing in a plane such that the number of edge crossings is $\tildeO(n)$ and two vertices $s$ and $t$, then for any constant $0<\epsilon<\frac{1}{2}$, there is an algorithm that decides whether there is a path from $s$ to $t$ or not. This algorithm runs in polynomial time and uses $O(\sspace)$ space, where $n$ is the number of vertices of $G$.
\end{theorem}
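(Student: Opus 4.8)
The plan is to treat this as a reduction result: I would map the reachability question on the given almost-planar sparse DAG $G$ to an instance of the {\RB} problem on a genuinely planar DAG $G'$, and then simply run the algorithm of Theorem \ref{thm:redbluepath} on $G'$. The reduction of \cite{Kul09} already establishes that general reachability reduces to planar-{\RB} (that is why the latter is {\NL}-hard); the entire content of the present statement is to control two quantities for our restricted inputs: (i) how many vertices $G'$ acquires under that reduction, and (ii) whether $G'$ can be handed to the {\RB} algorithm without ever being stored in full. I would organise the argument around exactly these two points.

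First I would spell out the reduction directly on the given drawing, and here two features of the hypothesis do the work. Because $G$ is already acyclic, I can skip the time-expansion/layering step that \cite{Kul09} needs in order to turn a general, possibly cyclic, reachability instance into a DAG; that layering is precisely what would otherwise multiply the vertex count by a factor of $\Theta(n)$, and avoiding it is what keeps the final size small. To enforce colour-alternation I would subdivide each edge $u\to v$ into a Red edge $u\to x$ followed by a Blue edge $x\to v$, so that every directed $s$--$t$ walk becomes a Red--Blue alternating walk; this costs one new vertex per edge, i.e.\ $\tildeO(|E|)=\tildeO(n)$ vertices. To remove planarity violations I would replace each of the $\tildeO(n)$ crossings of the given drawing by the constant-size colour gadget of \cite{Kul09}, whose alternation constraint forces a path entering along one of the two crossing strands to leave only along the continuation of that same strand, thereby blocking the spurious shortcuts that a naive crossing-vertex would create. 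The outcome is a planar DAG $G'$ on $N=\tildeO(n)$ vertices with a directed $s$--$t$ path in $G$ if and only if a Red--Blue alternating $s'$--$t'$ path exists in $G'$. This is exactly where the ``less than a quadratic factor'' remark bites: for a dense graph, or one drawn with $\Theta(n^2)$ crossings, the gadget count and hence $N$ would be $\Theta(n^2)$, so the {\RB} algorithm would use $O(N^{1/2+\epsilon})=O(n^{1+2\epsilon})$ space and buy nothing; the hypotheses $|E|=\tildeO(n)$ and $\tildeO(n)$ crossings are precisely what forces $N=\tildeO(n)$.

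Second I would argue that $G'$ is never materialised. Every vertex of $G'$ is either an original vertex of $G$, an edge-subdivision vertex named by the edge it splits, or a gadget vertex attached to a crossing named by the unordered pair of edges of $G$ producing it. Given such a name, I can recover the local adjacency and the colour of any incident edge of $G'$ from $G$ together with its drawing using only $O(\log n)$ working space (to test whether two edges cross and to look up the fixed gadget wiring). Consequently the {\RB} algorithm of Theorem \ref{thm:redbluepath}, run on $G'$, can resolve each of its neighbour/colour queries on the fly through this local decoder, and its nominal space bound $O(N^{1/2+\epsilon})$ collapses to $O(\sspace)$ once $N=\tildeO(n)$ is substituted. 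Polynomial running time survives because the decoder is polynomial and is called only polynomially often.

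I expect the main obstacle to lie in this second step rather than in the reduction. The {\RB} algorithm of Theorem \ref{thm:redbluepath} is itself recursive and repeatedly invokes {\PlSepFam} to build separator families of the planar graph it is handed; to supply $G'$ implicitly I must verify that \emph{every} access the algorithm makes to $G'$ — including those inside {\PlSepFam} and inside the recursive sub-instances on connected components — factors through the neighbour/colour interface that my $O(\log n)$-space decoder can simulate, and that the simulation overhead does not accumulate across the $O(1/\epsilon)$ recursion levels so as to spoil either the $\tildeO(\sspace)$ space bound or the polynomial time bound. Confirming that the composition of local gadget-decoding with the recursion preserves both budgets is the only delicate point; the logical equivalence ``$s$ reaches $t$ in $G$ iff a Red--Blue path exists in $G'$,'' together with the acyclicity and planarity of $G'$, is inherited directly from \cite{Kul09}.
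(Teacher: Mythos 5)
Your proposal follows essentially the same route as the paper: apply the Kulkarni-style crossing-gadget reduction directly to the given drawing (subdividing edges for colour alternation and replacing each crossing by the constant-size gadget), observe that $|E|=\tildeO(n)$ and $\tildeO(n)$ crossings keep the resulting planar DAG at $\tildeO(n)$ vertices, and then invoke Theorem~\ref{thm:redbluepath} on it. Your additional concern about never materialising $G'$ is a legitimate point that the paper leaves implicit, but it is handled by the standard composition of a logspace-computable local reduction with a sublinear-space algorithm, so it does not alter the argument.
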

\begin{proof}
 We consider a reduction similar to the reduction from directed reachability problem to {\RB} problem in planar DAG given in \cite{Kul09}. We do the following: (i) insert new vertices in between edges of $G$ so that in the resulting graph each edge takes part in only one crossing and (ii) replace each crossing of the resulting graph with a \emph{planarizing gadget} as in Fig. \ref{fig:gadget} and also replace each edge without any crossing with two edges as shown in Fig. \ref{fig:gadget}. Denote the resulting graph as $G_{planar}$ and the corresponding vertices of $s$ and $t$ as $s'$ and $t'$. It is easy to see that there is a bijection between $s-t$ paths in $G$ and $s'-t'$ paths in $G_{planar}$ that starting with Red edge alternates between Red and Blue edges and finally ends with Blue edge.\\
 \begin{figure}[htb]
 \begin{center}
  \includegraphics{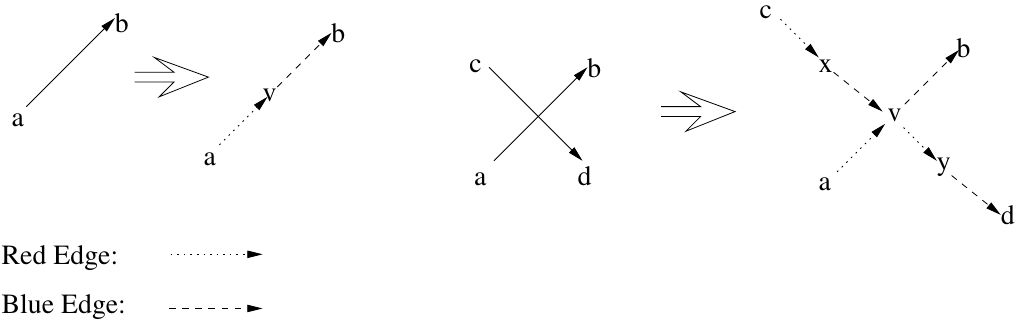}
  \caption{Red-Blue Edge Gadget}
  \label{fig:gadget}
 \end{center}
\end{figure}
 Now we count the number of vertices in $G_{planar}$. If the drawing of the given graph $G$ contains $k$ edge crossings, then the step (i) will introduce at most $2k$ many vertices and say after this step the number of edges becomes $ m $. Then step (ii) will introduce at most $ (2m+3k) $ many vertices. Thus the graph $G_{planar}$ contains $\tildeO(n)$ many vertices and then by applying the algorithm {\tt RedBluePath} on $G_{planar}$, we get the desired result.

\end{proof}

A large class of graphs will satisfy the conditions specified in Theorem \ref{thm:sparsereach}. We now explicitly give an example of one such class of graphs. Before that, we give some definitions. \emph{Crossing number} of a graph $G$, denoted as $cr(G)$, is the lowest number of edge crossings (or the crossing point of two edges) of a drawing of the graph $G$ in a plane. A graph is said to be \emph{$k$-planar} if it can be drawn on the plane in such a way that each edge has at most $k$ crossing point (where it crosses a single edge). It is known from \cite{PT97} that a $k$-planar graph with $n$ vertices has at most $O(n \sqrt{k})$ many edges. It is easy to see that a $k$-planar graph has crossing number at most $mk$, where $m$ is the number of edges. Now we can state the following corollary.
\begin{corollary}
 \label{cor:kplanarDAG}
 Given a directed acyclic graph, which is $k$-planar, where $k=O(\log^c n)$, for some constant $c$, with a drawing in a plane having minimum number of edge crossings and two vertices $s$ and $t$, then for any constant $0<\epsilon<\frac{1}{2}$, there is an algorithm that decides whether there is a path from $s$ to $t$ or not. This algorithm runs in polynomial time and uses $O(\sspace)$ space, where $n$ is the number of vertices of the given graph.
\end{corollary}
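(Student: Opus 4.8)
The plan is to derive this corollary directly from Theorem \ref{thm:sparsereach} by verifying that a $k$-planar DAG with $k = O(\log^c n)$ meets both hypotheses of that theorem: that it has $\tildeO(n)$ edges and that it admits a drawing with $\tildeO(n)$ edge crossings. Once those two facts are in hand, the reachability algorithm of Theorem \ref{thm:sparsereach} applies verbatim and yields the claimed polynomial-time, $O(\sspace)$-space bound.

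First I would bound the number of edges. By the result of \cite{PT97} quoted just before the corollary, a $k$-planar graph on $n$ vertices has at most $O(n\sqrt{k})$ edges. Substituting $k = O(\log^c n)$ gives $m = O(n\sqrt{\log^c n}) = O(n\log^{c/2}n) = \tildeO(n)$, so the edge-count hypothesis of Theorem \ref{thm:sparsereach} is satisfied.

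Next I would bound the number of crossings in the supplied drawing. Since the drawing is assumed to realize the minimum possible number of edge crossings, that number equals the crossing number $cr(G)$. Using the observation that a $k$-planar graph has crossing number at most $mk$, together with the edge bound just derived, I get $cr(G) \le mk = O(n\log^{c/2}n)\cdot O(\log^c n) = O(n\log^{3c/2}n) = \tildeO(n)$, so the crossing hypothesis holds as well.

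With both conditions verified, the corollary follows immediately by invoking the algorithm of Theorem \ref{thm:sparsereach} on the given drawing of the DAG, deciding $s$-$t$ reachability in polynomial time and $O(\sspace)$ space. I expect the only delicate point to be the bookkeeping of polylogarithmic factors, ensuring they stay absorbed inside the $\tildeO$ notation; since both $\sqrt{k}$ and $k^{3/2}$ remain polylogarithmic in $n$ whenever $k = O(\log^c n)$, there is no genuine obstacle here, and the entire argument reduces to substituting the two bounds into the hypotheses of Theorem \ref{thm:sparsereach}.
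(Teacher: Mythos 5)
Your proposal is correct and follows exactly the route the paper intends: the corollary is stated immediately after the paper notes the $O(n\sqrt{k})$ edge bound from \cite{PT97} and the $cr(G)\le mk$ observation, and the intended proof is precisely to substitute $k=O(\log^c n)$ into these two bounds to verify the hypotheses of Theorem \ref{thm:sparsereach}. Your bookkeeping of the polylogarithmic factors and your use of the minimality of the drawing to identify its crossing count with $cr(G)$ are both sound.
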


\subsection{Deciding Even Path in Planar DAG}
\label{subsec:evenpath}
In a given directed graph $G$ with each edge colored with Red or Blue, {\EB} is the problem of deciding the presence of a (simple) path between two given vertices $s$ and $t$, that contains even number of edges. We can view this problem as a relaxation of {\RB} problem as a path starting with Red edge and ending with Blue edge is always of even length. In this section, we establish a relation between {\EB} problem in planar DAG with detecting a odd length cycle in a directed planar graph with weight one (can also be viewed as an unweighted graph).

\begin{proposition}
  \label{thm:oddccycle}
  For directed planar graphs, for any constant $ 0 < \epsilon < \frac{1}{2} $, there is an algorithm that solves the problem of deciding the presence of odd length cycle in polynomial time and $ O(n^{\frac{1}{2}+\epsilon}) $ space, where $n$ is the number of vertices of the given graph.
\end{proposition}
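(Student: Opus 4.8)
The plan is to reduce the detection of a directed odd cycle to a reachability question in a \emph{parity-doubled} graph, and then to answer that reachability query using a separator family \emph{lifted} from $G$, so that the possible non-planarity of the doubled graph never enters the picture.

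First I would build the parity-doubled graph $\tempG$ on vertex set $V\times\{0,1\}$: for every directed edge $(u,v)$ of $G$, I put the two edges $((u,0),(v,1))$ and $((u,1),(v,0))$ into $\tempG$. A directed walk in $\tempG$ toggles the second coordinate at every step, so $(v,0)$ reaches $(v,1)$ in $\tempG$ exactly when there is a closed directed walk of odd length through $v$ in $G$. Since any odd-length closed walk decomposes into cycles whose lengths sum to an odd number, at least one of those cycles has odd length; hence $G$ has an odd directed cycle if and only if $(v,0)$ reaches $(v,1)$ in $\tempG$ for some $v\in V$. It therefore suffices to answer the $n$ reachability queries ``$(v,0)\to(v,1)$?'' in $\tempG$, reusing the workspace across queries so the space is charged only once.

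The subtle point is that $\tempG$ need not be planar, so {\algoPlanarReach} (Theorem \ref{thm:planar-reach}) cannot be applied to it as a black box. The observation that rescues the approach is that a separator family of $G$ lifts to one of $\tempG$: if $\overline{S}$ is the $n^{1-\epsilon}$-separator family returned by {\PlSepFam} on the underlying undirected graph of $G$ (Theorem \ref{thm:separator}(b)), then $\overline{S}\times\{0,1\}$ is a separator family of $\tempG$ of size $O(n^{1/2+\epsilon})$, and every connected piece of $\tempG$ remaining after its removal is the parity-doubling of a connected piece of $G[V\setminus\overline{S}]$, hence has at most $2n^{1-\epsilon}$ vertices. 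I would therefore run the separator-based divide-and-conquer underlying {\algoPlanarReach} on $\tempG$, but with each separator family obtained by calling {\PlSepFam} on the underlying undirected graph of the current \emph{planar} piece of $G$ and doubling it on the fly. The recursion then descends into the parity-doublings of the sub-pieces, which are themselves parity-doublings of planar graphs, so the very same step applies at every level; when a piece of $G$ has at most $\sqrt{n}$ vertices I would form its (at most $2\sqrt{n}$-vertex) parity-doubling explicitly and decide reachability by a direct search.

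Correctness and the resource bounds then follow as for {\PlShortPath}: the recursion depth is $O(1/\epsilon)$ because each level shrinks the underlying planar piece from $n'$ to at most ${n'}^{1-\epsilon}$, and each level stores only a doubled separator family of size $\tildeO(n^{1/2+\epsilon})$, giving total space $\tildeO(n^{1/2+\epsilon})$ and polynomial time. I expect the main obstacle to be precisely this non-planarity of $\tempG$ and the consequent need to phrase the entire computation in terms of separators of the underlying planar graph $G$ rather than of $\tempG$ itself; once the lifting observation is in place, the parity bookkeeping (tracking which copy of a separator vertex a sub-walk enters and leaves) and the complexity analysis are routine adaptations of the shortest-path argument already given.
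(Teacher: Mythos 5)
Your reduction is sound: the parity double cover $\tempG$ detects odd closed walks through each vertex, an odd closed walk decomposes into directed cycles whose lengths sum to an odd number (so one of them is odd), and the lifted family $\overline{S}\times\{0,1\}$ is a $2n^{1-\epsilon}$-separator family for $\tempG$ of size $\tildeO(n^{\frac{1}{2}+\epsilon})$, so re-running the divide-and-conquer underlying {\algoPlanarReach} with all separators computed on the planar pieces of $G$ and doubled on the fly does go through. This is, however, a genuinely different route from the paper's. The paper first splits $G$ into strongly connected components (by polynomially many calls to {\algoPlanarReach}), then invokes a combinatorial observation that a strongly connected directed graph has an odd directed cycle if and only if its underlying undirected graph has an odd cycle, and finally tests bipartiteness of each planar undirected component by a separator-based recursive BFS two-colouring ({\PlOddCycle}), reporting an odd cycle when some separator vertex receives conflicting colours. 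The paper's route stays entirely inside planar graphs, so each subroutine is used as a black box, at the price of proving the somewhat delicate directed-to-undirected observation (an induction with case analysis on the cycle length). Your route dispenses with both the strong-component decomposition and that observation, but must open up the reachability algorithm to justify that separators of $G$ can stand in for separators of the non-planar $\tempG$ --- an obstacle you correctly identify and resolve with the lifting argument. Both approaches yield the claimed polynomial time and $O(n^{\frac{1}{2}+\epsilon})$ space bound.
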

The above proposition is true due to the fact that we can do BFS efficiently for undirected planar graph and it is enough to detect odd length cycle in each of the strong components of the undirected version of the given directed planar graph.
For undirected graph, presence of odd length cycle can be detected using BFS algorithm and then put red and blue colors on the vertices such that vertices in the consecutive levels get the opposite colors. After coloring of vertices if there exists an monochromatic edge (edge where both vertices get the same color), then we can conclude that there is an odd length cycle in the graph otherwise there is no odd length cycle. But this is not the case for general directed graph. However, the following observation will help us to detect odd length cycle in directed graph. In the following proof, we use $u \rightarrow v$ to denote a directed edge $(u,v)$ and $x \xrightarrow{P} y$ to denote a directed path $P$ from a vertex $x$ to $y$.
\begin{observation}
\label{lem:oddundirected}
A strongly connected directed graph contains an odd length cycle if and only if its undirected version contains an odd length cycle.
\end{observation}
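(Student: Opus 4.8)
The plan is to prove the two directions separately, with the forward implication being essentially immediate and the reverse implication carrying all the work via a parity argument that exploits strong connectivity.

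First I would handle the easy direction. If the strongly connected digraph $G$ contains a directed cycle of odd length, then forgetting the orientations of its edges yields a simple cycle of the same (odd) length in the undirected version, so the undirected version contains an odd cycle. Nothing more is needed here.

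For the converse I would argue the contrapositive: assuming $G$ has no odd directed cycle, I would show that its undirected version $G_u$ is bipartite (equivalently, has no odd cycle). The key tool is the fact that in a digraph any closed directed walk decomposes into edge-disjoint directed cycles, so its length is the sum of the lengths of those cycles; hence if $G$ has no odd directed cycle, every closed directed walk in $G$ has even length. Using strong connectivity I fix a root $r$, and for each vertex $v$ I pick any directed walk from $r$ to $v$ and define $p(v)\in\{0,1\}$ to be the parity of its length. To see this is well-defined, suppose two $r$-to-$v$ walks had different parities; concatenating each with a directed return walk from $v$ to $r$ (which exists by strong connectivity) yields two closed directed walks at $r$ of opposite parities, forcing one of them to be odd, contradicting the previous observation.

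Finally I would verify that $c(v):=p(v)$ is a proper $2$-coloring of $G_u$. Every undirected edge of $G_u$ arises from a directed edge, say $u\to v$; appending this edge to an $r$-to-$u$ walk produces an $r$-to-$v$ walk whose length parity is flipped, so $p(v)\ne p(u)$ and the edge is bichromatic. Thus $G_u$ is $2$-colorable and therefore has no odd cycle, establishing the contrapositive and completing the equivalence. The step I expect to be the main obstacle, or at least the one requiring the most care, is justifying that an odd closed directed walk forces an odd directed cycle, i.e. the cycle decomposition of closed walks in a digraph, since this is precisely where the directed structure (as opposed to a naive undirected parity argument) matters; the rest is bookkeeping once strong connectivity supplies the return walks needed to close up arbitrary pairs of walks.
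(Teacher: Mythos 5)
Your proof is correct, but it takes a genuinely different route from the paper. The paper proves the nontrivial direction by induction on the length of the odd undirected cycle: it handles a $3$-cycle as a base case by an explicit case analysis on edge orientations, and for a $(k+2)$-cycle it finds two consecutive co-oriented edges, contracts them, invokes the induction hypothesis, and then re-expands, each time using strong connectivity to supply a return path and arguing that the several cycles so formed cannot all be even. You instead argue the contrapositive globally: if $G$ has no odd directed cycle, then (since every closed directed walk decomposes into directed cycles, hence has even length) the parity of the length of a directed $r$--$v$ walk is a well-defined function $p(v)$, every arc flips $p$, and so the underlying undirected graph is properly $2$-colored and therefore bipartite. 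Your one flagged step --- that a closed directed walk's length is the sum of lengths of directed cycles --- is the standard peeling argument (split the walk at the first repeated vertex and recurse) and is unproblematic. What your approach buys is a shorter, figure-free argument that also yields a stronger structural conclusion (an explicit bipartition), and it dovetails with how the paper actually uses the observation, since the algorithm {\PlOddCycle} is precisely a BFS $2$-coloring test on the undirected version. What the paper's approach buys is a more local, constructive extraction of an odd directed cycle from a given odd undirected cycle, at the cost of a longer case analysis whose parity bookkeeping (e.g.\ ``all three cycles cannot be of even length'') the reader must verify by hand.
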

\begin{proof}
If a strongly connected directed graph contains an odd length cycle then its undirected version contains an odd length cycle and so there is nothing to prove.\\
Now to prove the reverse direction, we will use the induction arguments on the length of the odd cycle in the undirected version of the graph. The base case is when the undirected version of the graph contains a $3$-length cycle. If the undirected edges present in the undirected cycle also form directed cycle when we consider the corresponding edges in the directed graph, then there is nothing to prove. But if this is not the case, then the Fig. \ref{fig:oddlenbase} will depict the possible scenarios. As the graph is strongly connected, so there must be a path $P$ from $t$ to $s$ and if this path does not pass through the vertex $x$, then any one of the following two cycles $ s \rightarrow t \xrightarrow{P} s $ or $ s \rightarrow x \rightarrow t \xrightarrow{P} s $ must be of odd length. Now suppose $P$ contains the vertex $x$ and thus $ P = P_1 P_2 $, where $ P_1 $ is the path from $t$ to $x$ and $ P_2 $ is the path from $x$ to $s$. It is easy to see that all the three cycles $ s \rightarrow t \xrightarrow{P} s $, 
$ x \rightarrow t \xrightarrow{P_1} x $ and $ s \rightarrow x \xrightarrow{P_2} s $ cannot be of even length.\\

\begin{figure}[htb]
 \begin{center}
  \includegraphics{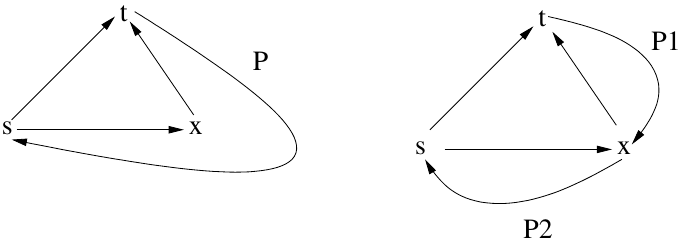}
  \caption{For undirected cycle of length $3$}
  \label{fig:oddlenbase}
 \end{center}
\end{figure}

Now by induction hypothesis, assume that if the undirected version has a cycle of $k$-length ($k$ odd), then there exists an odd length cycle in the original directed graph.\\

\begin{figure}
 \begin{center}
  \includegraphics{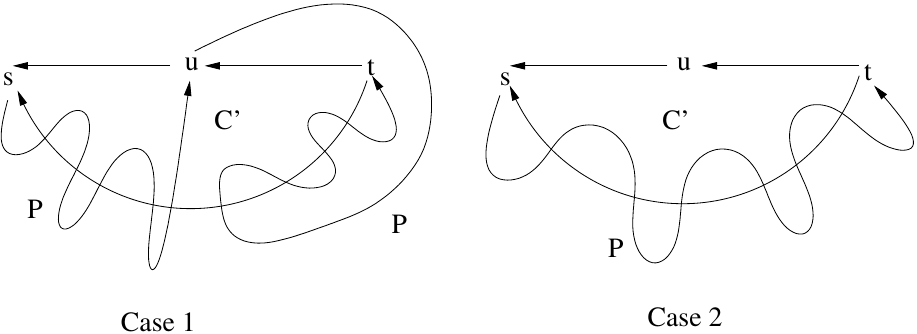}
  \caption{For undirected cycle of length $(k+2)$}
  \label{fig:oddleninduction}
 \end{center}

\end{figure}

Now lets prove this induction hypothesis for any undirected cycle of length $(k+2)$. Consider the corresponding edges in the directed graph and without loss of generality assume that this is not a directed cycle. As $(k+2)$ is odd, so there must be one position at which two consecutive edges are in the same direction. Now contract these two edges in both directed and undirected version of the graph and consider the resulting $k$-length cycle in the undirected graph. So according to the induction hypothesis, there must be one odd length cycle $C$ in the resulting directed graph. Now if $C$ does not contain the vertex $u$ (where we contract the two edges), then expanding the contracted edges will not destroy that cycle and we get our desired odd length cycle in the directed version of the graph. But if this is not the case, then consider $C$ after expanding those two contracted edges($ t \rightarrow u \rightarrow s $), say the resulting portion is $ C' $. If $C'$ is a cycle, then there is nothing more to do. 
But if not, then consider the path $P$ from $s$ to $t$ (there must be such path as the graph is strongly connected). Now  there will be two possible cases: either $P$ contains $ u $ or not. It is easy to see that for both the possible cases (case $1$ and case $2$ of Fig. \ref{fig:oddleninduction} and in that figure every crossing of two paths denotes a vertex), all cycles generated by $ C' $ and $P$ cannot be of even length. In case $1$, if all the cycles generated by the paths $ s \xrightarrow{P} u $ and $ t \xrightarrow{C'} s $ and all the cycles generated by the paths $ u \xrightarrow{P} t $ and $ t \xrightarrow{C'} s $ are of even length, then as $ t \xrightarrow{C'} s $ is of odd length, so the path $ s \xrightarrow{P} u \xrightarrow{P} t $ must be of odd length. And then one of the following two cycles  $ s \xrightarrow{P} u \rightarrow s $ and $ u \xrightarrow{P} t \rightarrow u $ is of odd length. Similarly in case $2$, if all 
the cycles generated by $ s \xrightarrow{P} t $ and $ t \xrightarrow{C'} s $ are of odd length, then the path $ s \xrightarrow{P} t $ is 
of odd length and so the cycle $ s \xrightarrow{P} t \rightarrow u \rightarrow s $ is of odd length.
\end{proof}

\begin{proof}[Proof of Proposition \ref{thm:oddccycle}]
In a directed graph, any cycle cannot be part of two different strong component, so checking presence of odd cycle is same as checking presence of odd cycle in each of its strong components. Constructing strong components of a directed planar graph can be done by polynomial many times execution of {\algoPlanarReach} algorithm, as a strong component will contain vertices $x$, $y$ if and only if {\algoPlanarReach} ($G,x,y,n$) and {\algoPlanarReach} ($G,y,x,n$) both return ``yes''. And thus strong component construction step will take $ \tildeO(\sspace) $ space and polynomial time. After constructing strong components, it is enough to check presence of odd cycle in its undirected version (according to Observation \ref{lem:oddundirected}). So now on, without loss of generality, we can assume that the given graph $G$ is strongly connected and let $G_{undirec}$ be the undirected version of $G$. Now execute {\PlOddCycle} ($G_{undirec},s,n $) (Algorithm \ref{algo:oddcycle}) after setting the color of $s$ (any arbitrary vertex) to red.\\
\begin{algorithm}[h]
 \SetAlgoLined
\Input{$ G'=(V',E'),s',n $, where $G'$ is an undirected graph}
\Output{``Yes'' if there is an odd length cycle}
  \eIf{$ n' \le n^{\frac{1}{2}} $}{
   Run {\tt BFS}($G'$ , $s' $) and color the vertices with red and blue such that vertices in the alternate layer get the different color starting with a vertex that is already colored\;
   \If{there is a conflict between stored color of a vertex and the new color of that vertex or there is an edge between same colored vertices}{
      		return ``yes''\;
      	}
   }{
   //let $ r'=n'^{(1-\epsilon)} $\\
   Run {\PlSepFam} on $G'$ to compute $ r' $-separator family $ \overline{S'} $\;
   Set $ S' : = \overline{S'} \cup \{s'\} $\;
   \For{every $ x \in V' $}
   {
   	//$ V_x=$ the set of vertices of $ G[V'\setminus S'] $'s connected component containing $x$.\\
   		Run {\PlOddCycle}($ G[V_x \cup S'],s',n $)\;
   		Store color of the vertices of $ S' $ in an array of size $|S'|$\;
   }
  }
   \caption{Algorithm {\PlOddCycle}: Checking Presence of Odd Cycle in an Undirected Planar Graph}
   \label{algo:oddcycle}
\end{algorithm}

By doing the similar type of analysis as that of {\PlShortPath}, it can be shown that {\PlOddCycle} will take $ O(\sspace)$ space and polynomial time and so over all space complexity of detecting odd length cycle in directed planar graph is $ O(\sspace)$ and time complexity is polynomial in $n$.\\
Now we argue on the correctness of the algorithm {\PlOddCycle}. This algorithm will return ``yes'' in two cases. First case when there is a odd length cycle completely inside a small region ($ n' \le n^{\frac{1}{2}} $) and so there is nothing to prove for this case as it is an well known application of BFS algorithm. Now in the second case, a vertex $v$ in the separator family will get two conflicting colors means that there exists at least one vertex $u$ in the separator family such that there is two vertex disjoint odd as well as even length path from $u$ to $v$ and as a result, both of these paths together will form an odd length cycle.
\end{proof}
Now we are ready to prove the main theorem of this subsection.

\begin{proof}[Proof of Theorem \ref{thm:evenpath}]
Given a planar DAG $ G $ and two vertices $s$ and $t$, first report the shortest path from $s$ to $t$, say $ P $, which we can do in polynomial time and $ O(n^{\frac{1}{2}+\epsilon}) $ space (by Theorem \ref{thm:shortpath}) and if this path is not of even length, then construct a directed graph $ G' $ which has the same vertices and edges as $ G $ except the edges in path $ P $, instead we do the following: if there is an edge $(u,v)$ in $ P $, then we add an edge $(v,u)$ in $ G' $. Now we can observe that the new graph $ G' $ is a directed planar graph. Now we claim the following.

\begin{claim}
 $ G $ has an even length path if and only if $ G' $ has an odd length cycle.
\end{claim}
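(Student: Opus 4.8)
The plan is to prove both directions of the claim through one parity-counting device: overlay the path $P$ (or its reversal) with the object we are handed, observe that the overlay is a balanced directed multigraph except possibly at the two endpoints, and then read off the conclusion from the number of edges. Throughout I will lean on two facts from the setup: $G$ is a DAG, so it has no directed cycle and in particular never contains two antiparallel edges $u\to v$ and $v\to u$; and $P=(s=v_0\to v_1\to\cdots\to v_k=t)$ is an $s$-$t$ path of \emph{odd} length $k$ whose edges are exactly the ones reversed to build $G'$.

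For the forward direction, suppose $Q$ is an even-length $s$-$t$ path in $G$. First I would form the directed multigraph $D=Q\cup P^{\mathrm{rev}}$, where $Q$ keeps its orientation and $P^{\mathrm{rev}}$ is $P$ traversed from $t$ to $s$. A short degree count shows $D$ is balanced (in-degree equals out-degree at every vertex): at $s$ the out-edge of $Q$ is cancelled by the in-edge of $P^{\mathrm{rev}}$, at $t$ symmetrically, and every other vertex is internal to $Q$, to $P$, or to neither. The next step is to delete from $D$ each $2$-cycle $v_i\to v_{i+1}\to v_i$ coming from an edge shared by $Q$ and $P$; what remains, call it $D'$, is still balanced and consists exactly of the edges of $Q\triangle P$, each oriented as it sits in $G'$. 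Here the DAG property guarantees these orientations are legal in $G'$: an edge of $Q\setminus P$ is a non-$P$ edge of $G$ and survives into $G'$ with its original direction, while an edge of $P\setminus Q$ appears reversed, which is precisely a reversed-path edge of $G'$. Since a balanced digraph decomposes into edge-disjoint directed cycles and $|D'|=|Q\triangle P|\equiv |Q|+|P|\equiv 1\pmod 2$, at least one of those directed cycles is odd, giving the desired odd cycle in $G'$.

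For the reverse direction, suppose $C$ is a directed odd cycle in $G'$. Because $G$ is acyclic, $C$ must use at least one reversed-path edge, and since the forward edges $v_i\to v_{i+1}$ are absent from $G'$, every edge of $C$ is either a non-$P$ edge of $G$ or a reversed-path edge. I would then form $H=C\cup P$ with $P$ in its forward orientation, cancel the $2$-cycles created by each reversed edge of $C$ against the matching forward edge of $P$, and call the result $H'$. The cancellation leaves $H'$ entirely inside $G$ (again using that $G$ has no antiparallel edges, so nothing is removed or created by accident), with net out-excess $+1$ at $s$, net in-excess $+1$ at $t$, and balance elsewhere. Such a multigraph decomposes into one $s$-$t$ trail together with directed cycles; but $G$ is a DAG, so the cyclic part is empty and $H'$ is a single $s$-$t$ path $R$. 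Finally I would track parity: $|H|=|C|+|P|\equiv 1+1\equiv 0\pmod 2$, and deleting $2$-cycles preserves parity, so $|R|=|H'|$ is even, i.e. $R$ is an even-length $s$-$t$ path in $G$.

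The main obstacle I anticipate is not the parity arithmetic but the bookkeeping around edges shared between $P$ and the path or cycle we overlay with it: one must argue that after cancelling antiparallel pairs the surviving edges genuinely live in the intended graph with the intended orientations, and that no spurious edges sneak in. This is exactly where the DAG hypothesis does the work, since forbidding antiparallel edges forces every $2$-cycle in the overlay to arise solely from a genuine $P$-edge meeting its reversal, making the cancellation clean. A secondary point to state carefully is the cycle-decomposition lemma for balanced digraphs (and its unit-excess analogue producing a single $s$-$t$ path), together with the remark that in a DAG any directed trail is automatically a simple path, so the objects produced are bona fide simple cycles and simple paths.
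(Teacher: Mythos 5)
Your proof is correct. Both directions check out: the overlay $Q\cup P^{\mathrm{rev}}$ (resp.\ $C\cup P$) is balanced except for the unit excess at $s$ and $t$ where appropriate, the DAG hypothesis does exactly the work you assign to it (no antiparallel edges, so the only $2$-cycles in the overlay are a $P$-edge against its own reversal, and the residual trail in the reverse direction cannot hide a cycle), and the parity count $|Q|+|P|\equiv 1$, resp.\ $|C|+|P|\equiv 0\pmod 2$, survives the pairwise cancellations. Your route is, however, genuinely different in execution from the paper's. The paper argues both directions by hand: for ``odd cycle $\Rightarrow$ even path'' it assumes the odd cycle meets $P_{\mathrm{rev}}$ in a single contiguous segment between two vertices $x$ and $y$ and splices $s\xrightarrow{P}y\xrightarrow{C'}x\xrightarrow{P}t$, and for the converse it inspects ``all the cycles formed by $P_{\mathrm{rev}}$ and portions of $P_1$'' with a pigeonhole-style parity argument; neither direction explicitly handles a cycle or path that weaves in and out of $P$ several times, nor does it verify simplicity of the constructed objects. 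Your balanced-multigraph formulation buys exactly that: the Eulerian decomposition into edge-disjoint simple cycles (plus one $s$--$t$ trail in the reverse direction) treats arbitrarily many intersection segments uniformly and certifies that the output is a bona fide simple cycle of $G'$, resp.\ simple path of $G$. The cost is that you must invoke (or prove) the decomposition lemma for balanced digraphs and its unit-excess variant, which the paper's more pictorial argument avoids; I would state that lemma explicitly if you write this up, but there is no gap.
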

\begin{proof}
Suppose $ G' $ has an odd length cycle, then that cycle must contains the reverse edges of $ P $ in $ G $. Denote the reverse of the path $ P $ by $ P_{rev} $. Now lets assume that the odd cycle $ C' $ contains a portion of $ P_{rev} $ (See Fig. \ref{fig:oddlencycle}). Assume that the cycle $ C' $ enters into $ P_{rev} $ at $x$ (can be $t$) and leaves $ P_{rev} $ at $y$ (can be $s$). Then in the original graph $ G $, the path $ s \xrightarrow{P} y \xrightarrow{C'} x \xrightarrow{P} t $ is of even length.\\
\begin{figure}[htb]
 \begin{center}
  \includegraphics{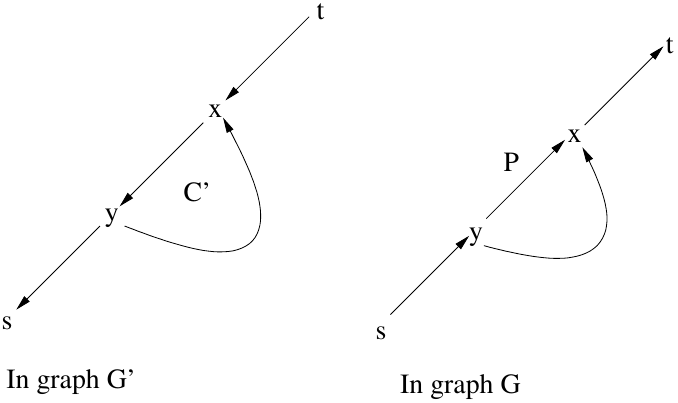}
  \caption{When $G'$ contains an odd length cycle}
  \label{fig:oddlencycle}
 \end{center}
\end{figure}
\begin{figure}[htb]
 \begin{center}
  \includegraphics{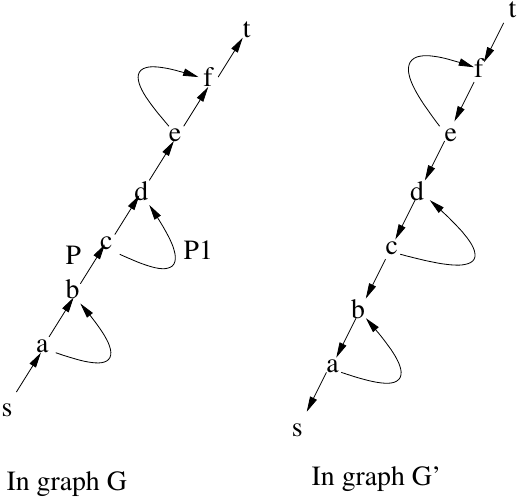}
  \caption{When $G$ contains an even length $s-t$ path}
  \label{fig:evenlenpath}
 \end{center}
\end{figure}
Now for the converse, lets assume that there exists an even length path $ P_{1} $ from $s$ to $t$ in $ G $. Both the paths $ P $ and $ P_{1} $ may or may not share some edges and without loss of generality we can assume that they share some edges (See Fig. \ref{fig:evenlenpath}). Now if we consider all the cycles formed by $ P_{rev} $ and portions of $ P_{1} $ in $ G' $, then it is easy to see that all the cycles cannot be of even length until length of $ P $ and $ P_{1} $ both are of same parity (either both odd or both even), but this is not the case.
\end{proof}
Now we can check the presence of odd length cycle in the graph $ G' $ in polynomial time and $ O(n^{\frac{1}{2} + \epsilon}) $ space (by Proposition \ref{thm:oddccycle}).
\end{proof}

\section{Perfect Matching in Planar Bipartite Graphs}
\label{sec:matching}
\subsection{Finding a Perfect Matching}
In a graph $G$, a \emph{matching} is a set of vertex disjoint edges where the end-points of these edges are \emph{matched}. A \emph{perfect matching} is a matching where every vertex is matched. In this section, we consider the following two matching problems.\\
(i) {\PM} (Decision): given a graph $G$, decide whether $G$ contains a perfect matching and\\(ii) {\PM} (Construction): given a graph $G$, construct a perfect matching (if exists).\\
We first discuss the Miller and Naor's algorithms ({\tt MN-Pseudo-Flow} \& {\tt MN-Decision}) for solving the decision version of the perfect matching problem in planar bipartite (undirected) graphs and also the algorithm to construct the perfect matching ({\tt MN-Construction}). Our main observation is that all the algorithms can be implemented in polynomial time and using $ O(n^{\frac{1}{2}+\epsilon}) $ space.\\
Before discussing the algorithms, we first define some terminology which will be used later.\\
A \emph{capacity-demand graph} of an undirected graph $G=(V,E)$ is defined as a triple $(G'=(V',E'),c,d)$, where $E'=\{(u,v)\mid\{u,v\}\in E\}$  and every edge $(u,v) \in E' $ is assigned a real valued \emph{capacity} $c(u,v)$ and every vertex $ v \in V' $ is assigned a real valued \emph{demand} $d(v)$.\\
A \emph{pseudo-flow} in a capacity-demand graph $(G=(V,E),c,d)$ is defined as a function $f:E  \rightarrow \mathbb{R} $ such that:\\
(i) for every edge $(u,v)$ $ \in E $, $ f(u,v)=-f(v,u) $ and \\
(ii) for every vertex $ v \in V $, $ \sum\limits_{w \in V:(v,w)\in E} f(v,w) = d(v) $\\
A \emph{flow} in a capacity-demand graph $(G=(V,E),c,d)$ is defined as a function $f : E \rightarrow \mathbb{R} $ such that:\\
(a) $f$ is a pseudo-flow in $(G,c,d)$ and\\
(b) for every $(u,v)$ $ \in E $, $ f(u,v) \le c(u,v) $\\
A \emph{zero-demand} graph $(G,c)$ is a capacity-demand graph where $d(v)=0, \forall_{v \in V}$.\\

\begin{definition}[Directed Dual]
Suppose the dual of a undirected planar graph $G=(V,E)$ respect to a fixed embedding is denoted by $ G^d=(V^d,E^d)$. Then the directed dual of $G$ is a directed graph denoted by $G^*=(V^*,E^*)$ such that $E^*=\{(u,v)\mid\{u,v\}\in E^d\}$.
\end{definition}
Now we are ready to mention the main lemma from \cite{MN89}.
\comment{
\begin{proposition}[\cite{MN89}]
Let $f$ be a flow in a zero-demand graph $(G,c)$. If $ C^* = (e_1^*,\cdots,e_k^*) $ is a directed cycle in $ \overleftrightarrow{G} $, then $ \sum\limits_{e:e^* \in C^*} f(e) = 0 $
\end{proposition}}
\begin{lemma}[\cite{MN89}]
If $(G,c)$ be a zero-demand graph, then there exists a flow in $(G,c)$ if and only if the directed dual $ G^*$ contains no negative weight cycle with respect to weights $c$.
\end{lemma}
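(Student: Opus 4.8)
The plan is to establish the classical equivalence between feasible flows in a planar graph and feasible vertex potentials in its dual, which ultimately reduces to the Bellman--Ford feasibility criterion for systems of difference constraints. The first step is to observe that a flow in the zero-demand graph $(G,c)$ is exactly a \emph{circulation} subject to the capacity constraints: since $d(v)=0$ for every $v$, condition (ii) of the pseudo-flow definition reads $\sum_{w:(v,w)\in E} f(v,w)=0$, i.e.\ conservation holds at every vertex. Hence ``there exists a flow'' means precisely ``there exists an $f$ obeying antisymmetry, conservation at every vertex, and $f(u,v)\le c(u,v)$ on every directed edge.''

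The key step is planar duality. Fixing the embedding that defines $G^d$ (and hence $G^*$), I would invoke the standard fact that in a planar graph the space of circulations of the primal is isomorphic to the space of coboundaries (potential differences) of the dual. Concretely, every circulation $f$ arises from a potential $\phi:V^*\to\mathbb{R}$ on the dual vertices (the faces of $G$) so that, for each primal edge $e$ separating faces $a$ and $b$, the value $f(e)$ equals $\phi(a)-\phi(b)$, where the sign is dictated by how the directed dual edge $e^*$ crosses $e$; conversely every such $\phi$ yields a circulation. This is exactly the statement that the cycle space of $G$ equals the cut space of $G^*$, and conservation at the primal vertices is what forces $f$ into the cycle space.

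Under this correspondence I would then translate the capacity constraints. The inequality $f(u,v)\le c(u,v)$ becomes a difference constraint $\phi(a)-\phi(b)\le c(u,v)$ attached to the directed dual edge of $E^*$ crossing $e$ in the corresponding sense and carrying weight $c(u,v)$; the oppositely oriented dual edge carries $c(v,u)$ and encodes $f(u,v)\ge -c(v,u)$. Thus a flow exists if and only if the system $\{\,\phi(\mathrm{head})-\phi(\mathrm{tail})\le c(e^*)\,\}_{e^*\in E^*}$ is feasible. Finally, by the standard difference-constraint/shortest-path theory (the correctness of {\BellmanFord}), such a system is feasible if and only if its constraint graph---which is literally $G^*$ with the weights $c$---contains no negative weight cycle. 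Chaining these equivalences yields the lemma in both directions.

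The step I expect to be the main obstacle is purely bookkeeping: pinning down the orientation and sign conventions in the planar duality so that each directed dual edge $e^*\in E^*$ is assigned the capacity of the correctly oriented primal edge, and so that the resulting difference-constraint system has constraint graph exactly equal to $(G^*,c)$ rather than some reweighting or reorientation of it. Once these conventions are fixed consistently (and one checks that they respect the antisymmetry $f(u,v)=-f(v,u)$ together with the ``both orientations'' structure of $E'$ and $E^*$), both directions of the equivalence follow immediately from the cycle-space/cut-space isomorphism and the negative-cycle feasibility criterion.
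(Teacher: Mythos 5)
Your proposal is correct: the chain ``flow in a zero-demand graph $=$ capacitated circulation $=$ dual coboundary $\phi(a)-\phi(b)$ subject to difference constraints $\phi(a)-\phi(b)\le c(e)$, feasible iff $(G^*,c)$ has no negative cycle'' is exactly the standard argument, and it is essentially the proof given in the cited source \cite{MN89}. The paper itself states this lemma as an imported result and offers no proof of its own, so there is nothing to contrast beyond noting that your sign/orientation bookkeeping (and, implicitly, connectedness of $G$ so that the cycle space of $G$ coincides with the cut space of $G^*$) is the only place where care is needed, and you have flagged it correctly.
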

\begin{algorithm}
\SetAlgoLined
 \Input{A capacity-demand graph $(G,c,d)$}
 \Promise{$ \sum _v d(v) = 0 $}
 \Output{A pseudo-flow in $(G,c,d)$}
 \BlankLine
 Construct a spanning tree $T$ in $G$\;
 For every edge $ (u,v) \not \in \overleftrightarrow{T}$, set $f'(u,v)=0$\;
 For an edge $ \{u,v\} \in T$, deleting the edge $\{u,v\}$ separates the tree $T$ into two sub-trees, denoted as $ T_u $ (sub-tree containing $u$) and $ T_v $ (sub-tree containing $v$). Set $ f'(u,v)=\sum\limits_{w \in t_u} d(w) $, $\forall_{\{u,v\} \in T}$\;
 \caption{{\tt MN-Pseudo-Flow} \cite{MN89} }
 \label{algo:MN-pseudo-flow}
\end{algorithm}

\begin{algorithm}
\SetAlgoLined
 \Input{A planar bipartite (undirected) graph $ G=(A \cup B,E) $}
 \Output{``Yes'' if $G$ has a perfect matching; ``No'' otherwise}
 \BlankLine
 Construct a capacity-demand graph $(G,c,d)$ as follows: set $ d(u)=1,\; \forall_{u \in A} $ and $ d(v)=-1,\; \forall_{v \in B} $. Also set $ c(u,v)=1 $ and $ c(v,u)=0 $, $\forall_{u \in A, v \in B}$\;
 Construct a pseudo-flow $ f' $ in $(G,c,d)$\;
 Construct a zero-demand graph $ (G, c-f') $\;
 Output Yes if the directed dual $ G^*$ has no negative weight cycle with respect to weights $ (c-f') $; Output No otherwise\;
 \caption{{\tt MN-Decision} \cite{MN89}}
 \label{algo:MN-decision}
\end{algorithm}

\begin{algorithm}
\SetAlgoLined
 \Input{A planar bipartite (undirected) graph $ G=(A \cup B,E) $} 
 \Promise{the directed dual $ G^*=(V^*,E^*)$ has no negative weight cycle with respect to weights $ (c-f') $}
 \Output{A perfect matching in $G$}
 \BlankLine
 Fix a vertex $ s^* \in V^*$\;
 Set $ f''(u^*,v^*) : = dist^w_{G^*} (s^*,v^*) - dist^w_{G^*}(s^*,u^*),\;\forall_{u^*\in V^*}  $\;
 Set $ f=f'' + f' $\;
 For $ u \in A,\: v \in B $ output ``$u$ is matched with $v$'' if and only if $ f(u,v)=1 $\;
 \caption{{\tt MN-Construction} \cite{MN89}}
 \label{algo:MN-construction}
\end{algorithm}

\begin{proof}[Proof of Theorem \ref{thm:perfectmatching}(a)]
We can construct pseudo-flow in a zero-demand graph in log space using {\tt MN-Pseudo-Flow}. Now using the Algorithm \ref{algo:MN-decision}, we reduce {\PM} (Decision) problem in planar bipartite (undirected) graph $G$ to the problem of detecting negative weight cycle in the directed planar graph $ G ^*$, which can be solved in polynomial time and $ O(n^{\frac{1}{2}+\epsilon}) $ space (by Corollary \ref{cor:negcycle}).\\
Now observe in the Algorithm \ref{algo:MN-construction} that the construction of perfect matching in $G$ boils down to the problem of finding the shortest distance $ dist^w_{G^*}(u,v) $ between two vertices $u$ and $v$ in $G^*$ and we can do this again in polynomial time and $ O(n^{\frac{1}{2}+\epsilon}) $ space (by Theorem \ref{thm:shortpath}).
\end{proof}

\subsection{Constructing a Hall Obstacle}
According to the Hall's Theorem \cite{LP86}, a bipartite (undirected) graph $ G=(A \cup B,E) $ has a perfect matching if and only if $ |A|=|B| $ and for every $ S \subseteq A $, $ |N(S)| \ge |S| $, where $ N(S):=\{v \in B | \exists u \in A:(u,v) \in E \} $. A \emph{hall-obstacle} in a bipartite graph $ G=(A\cup B,E) $ is a set $ S \subseteq A $ such that $ |N(S)|<|S| $. We consider the following problems.\\
(i) \emph{{\HO}} (Decision): given a bipartite (undirected) graph $G$, decide whether it contains a Hall-obstacle and\\
(ii) \emph{{\HO}} (Construction): given a bipartite (undirected) graph $G$, construct a Hall-obstacle (if exists).\\
In this subsection, we mention the correspondence between the problem of constructing Hall-obstacle in a planar bipartite graph and the the problem of finding negative weight cycle in a planar graph. For this, we restate some facts from \cite{DGKT12}.\\
Let $ G=(A \cup B,E) $ be a planar bipartite (undirected) graph. Now consider the capacity-demand graph $(G,c,d)$ and a pseudo-flow $f'$ in it as defined in Algorithm \ref{algo:MN-decision}. Let $ C^* $ be a negative weight cycle in the directed dual $ G^*$ with respect to weight $ c-f' $. Let $ (V_1 = A_1 \cup B_1, V_2=A_2 \cup B_2) $ be the cut in $ G $ corresponding to $ C^* $, where $ V_1 $ corresponds to the set of faces of $ G^*$ that are in the interior of $ C^* $ i.e., the vertices of $ G $ that are on one side of the cut corresponding to $ C^* $ and $V_2$ corresponds to the set of faces of $ G^*$ that are in the exterior of $ C^* $ i.e., the vertices of $ G $ that are on the other side of the cut corresponding to $ C^* $. Since $ f' $ is skew-symmetric, $ f'(C^*) $ decomposes into the sum of $ f' $s of the faces(in $ G^*$) that are in the interior of $ C^* $. Thus we have, $$ f'(C^*)=|A_1|-|B_1| \qquad\qquad\quad\quad\quad (1) $$
\begin{lemma}[\cite{DGKT12}]
For the edges $ (a,b) \in (V_1,V_2) $ where $ a\in A_1,\: b\in B_2 $ and $ c(a,b)=1 $, moving $b$ from $ B_2 $ to $ B_1 $ does not increase the weight of the cut and the corresponding cycle in the dual with respect to weights $ c-f' $.
\end{lemma}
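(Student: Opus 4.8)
The plan is to compute directly the change in the weight of the cut when $b$ is relocated from $B_2$ to $B_1$ and to show this change is nonpositive. By planar duality the weight of the cycle $C^*$ equals the weight of the cut measured in the $V_1 \to V_2$ direction, namely
\[ W = \sum_{x \in V_1,\; y \in V_2,\; (x,y)\in E'} \big(c(x,y) - f'(x,y)\big). \]
First I would pin down that this is the correct orientation: summing the pseudo-flow conservation constraint $\sum_{w}f'(v,w)=d(v)$ over $v \in V_1$, the contributions of edges internal to $V_1$ cancel by skew-symmetry, leaving the net $f'$-flow across the cut equal to $\sum_{v \in V_1} d(v) = |A_1| - |B_1|$, which is exactly equation (1). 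So $W$ above, taken in the $V_1 \to V_2$ sense, is the quantity to track.

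Next I would account for exactly which edges change status when $b$ moves. Since $G$ is bipartite and $b \in B$, every neighbor $w$ of $b$ lies in $A$, hence in $A_1$ or $A_2$. For $w \in A_1$ the edge $\{w,b\}$ was a $V_1\to V_2$ crossing edge (oriented $(w,b)$) and becomes internal to $V_1$, so its term $c(w,b)-f'(w,b)$ is removed; for $w \in A_2$ the edge was internal and becomes a crossing edge (oriented $(b,w)$), so its term $c(b,w)-f'(b,w)$ is added. Substituting the capacities $c(w,b)=1$ and $c(b,w)=0$ for $w\in A,\, b\in B$, and using skew-symmetry $f'(b,w)=-f'(w,b)$, the change telescopes to
\[ W' - W = \sum_{w \in N(b)} f'(w,b) \;-\; |A_1 \cap N(b)|, \]
where $N(b)$ denotes the neighbors of $b$.

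The final step is to invoke pseudo-flow conservation at $b$: since $d(b)=-1$ we have $\sum_{w\in N(b)} f'(b,w) = -1$, equivalently $\sum_{w\in N(b)} f'(w,b) = 1$, so that $W'-W = 1 - |A_1 \cap N(b)|$. The hypothesis supplies an edge $(a,b)$ with $a\in A_1$ and $c(a,b)=1$, hence $a \in A_1\cap N(b)$ and $|A_1\cap N(b)|\ge 1$, giving $W'-W \le 0$; this simultaneously bounds the cut weight and, by duality, the weight of the corresponding dual cycle. I expect the main obstacle to be purely bookkeeping—correctly classifying the orientation of each edge incident to $b$ before and after the move and keeping the sign conventions consistent with equation (1)—rather than anything deep; once that accounting is right, the statement follows from a single application of flow conservation at $b$.
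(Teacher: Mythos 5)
Your derivation is correct, but note that the paper you are comparing against contains no proof of this lemma at all: it is quoted verbatim from \cite{DGKT12} in a subsection that explicitly says ``we restate some facts from \cite{DGKT12}.'' So there is no in-paper argument to match; what you have supplied is a self-contained proof where the paper offers only a citation. Checking your steps against the paper's conventions: the orientation you fix for the cut is the right one, since summing $\sum_w f'(v,w)=d(v)$ over $v\in V_1$ and cancelling internal edges by skew-symmetry gives $\sum_{x\in V_1,\,y\in V_2}f'(x,y)=|A_1|-|B_1|$, which is exactly the paper's equation (1). Your bookkeeping of the edges incident to $b$ is also right: every neighbour of $b$ lies in $A$, the edges to $A_1\cap N(b)$ leave the cut (each removing $1-f'(w,b)$, using $c(w,b)=1$), the edges to $A_2\cap N(b)$ enter it in the reversed orientation (each adding $-f'(b,w)=f'(w,b)$, using $c(b,w)=0$), and the two sums combine to $W'-W=\sum_{w\in N(b)}f'(w,b)-|A_1\cap N(b)|=1-|A_1\cap N(b)|\le 0$ by conservation at $b$ ($d(b)=-1$) and the hypothesis $a\in A_1\cap N(b)$. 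The one point you gloss over is that after relocating $b$ the modified vertex set need not induce a single face-connected region, so ``the corresponding cycle in the dual'' may degenerate into a union of closed dual walks; the weight identity between the cut and the dual edge set still holds, so this does not affect the inequality, but it is worth a sentence if you want the dual-cycle half of the statement to be airtight.
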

\begin{corollary}[\cite{DGKT12}]
$ G^*$ has a negative weight cycle with respect to the weights $ c-f' $ if and only if there exists one such negative weight cycle with respect to the weights $ -f' $, and hence if and only if it has a negative weight cycle with respect to the weights $ c n^4 - f' $.
\end{corollary}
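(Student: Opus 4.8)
The plan is to prove all three ``negative weight cycle'' conditions equivalent by reducing each of them to a single combinatorial normal form. Fix a directed cycle $C^*$ in $G^*$ and let $(V_1 = A_1 \cup B_1,\, V_2 = A_2 \cup B_2)$ be the corresponding cut, with $V_1$ its interior. Since $G$ is bipartite and $c(v,u)=0$ for $v\in B,\,u\in A$, the only crossing edges in the $V_1 \to V_2$ orientation are the capacity-one edges from $A_1$ to $B_2$ and the capacity-zero edges from $B_1$ to $A_2$; hence the three weight functions $c-f'$, $-f'$ and $cn^4-f'$ agree on every crossing edge except the $A_1\to B_2$ ones. Writing $\kappa(C^*)$ for the number of those $A_1\to B_2$ edges, equation~(1) gives the clean expressions
\[
(-f')(C^*) = |B_1|-|A_1|, \quad (c-f')(C^*) = \kappa(C^*)+|B_1|-|A_1|, \quad (cn^4-f')(C^*) = n^4\kappa(C^*)+|B_1|-|A_1|.
\]
First I would show that each condition is equivalent to the existence of a cut in \emph{normal form}, meaning $\kappa(C^*)=0$ together with $|A_1|>|B_1|$; note that a normal-form cut is exactly a Hall-obstacle $S=A_1$, since $\kappa(C^*)=0$ forces $N(A_1)\subseteq B_1$ and hence $|N(A_1)|\le |B_1|<|A_1|$.

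The engine of the argument is the preceding Lemma. Starting from any negative cycle with respect to $c-f'$, I would repeatedly apply it: while some capacity-one edge $(a,b)$ crosses the cut with $a\in A_1$ and $b\in B_2$, move $b$ from $B_2$ to $B_1$. By the Lemma each such move leaves a valid dual cycle and does not increase its $c-f'$ weight, so the cut stays negative throughout; and since each move permanently removes $b$ from $B_2$ while leaving $A_1$ untouched, it creates no new $A_1\to B_2$ edge, so the process halts after at most $|B|$ steps at a cut with $\kappa(C^*)=0$. On that terminal cut the three weight functions coincide with $|B_1|-|A_1|$, which is therefore negative; in particular the terminal cut is in normal form and is simultaneously negative with respect to $-f'$ and with respect to $cn^4-f'$. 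Running the identical surgery with $c$ replaced by $cn^4$ works verbatim, because scaling the capacities by the positive factor $n^4$ does not affect the monotonicity asserted by the Lemma.

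For the converses I would argue directly from the displayed formulas. A normal-form cut has $\kappa(C^*)=0$ and $|A_1|>|B_1|$, so its weight under all three functions equals $|B_1|-|A_1|<0$; thus a single normal-form cut witnesses a negative cycle for each of $c-f'$, $-f'$ and $cn^4-f'$. Conversely, a cut that is negative with respect to $cn^4-f'$ must satisfy $n^4\kappa(C^*) < |A_1|-|B_1| \le n$, which for $n\ge 2$ forces $\kappa(C^*)=0$ and hence $|A_1|>|B_1|$, i.e.\ it is already in normal form. Chaining these equivalences through the normal form yields the stated biconditionals. The step I expect to be the main obstacle is the surgery in the second paragraph: one must be sure that relocating a single vertex across the cut still yields a \emph{simple} directed cycle in the dual whose weight is controlled, and this is precisely the content of the preceding Lemma, so the proof hinges on invoking that Lemma correctly at each step rather than on any new estimate.
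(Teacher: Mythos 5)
The paper does not actually prove this corollary---it is restated from \cite{DGKT12} as a known fact---so your proposal can only be judged on its own terms. Its core is correct and is surely the intended argument: with the $V_1\to V_2$ orientation the only crossing edges of positive capacity go from $A_1$ to $B_2$, so $(c-f')(C^*)=\kappa(C^*)+|B_1|-|A_1|$; the surgery supplied by the preceding Lemma drives $\kappa$ to $0$ in at most $|B|$ steps without increasing the $c-f'$ weight; and a cycle that is negative for $cn^4-f'$ must already have $\kappa(C^*)=0$ because $|f'(C^*)|=\bigl||A_1|-|B_1|\bigr|\le n<n^4$. This cleanly yields the equivalence between ``negative cycle for $c-f'$'' and ``negative cycle for $cn^4-f'$,'' together with the identification of the terminal ($\kappa=0$) cut with a Hall obstacle, which is exactly what the proof of Theorem~\ref{thm:perfectmatching}(b) consumes.

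The gap is in the $-f'$ clause. You announce that \emph{each} of the three conditions is equivalent to the existence of a normal-form cut, but for $-f'$ you only ever prove one direction (normal form $\Rightarrow$ negative for $-f'$), and the missing direction is false under the literal reading of the statement. Concretely: in the $4$-cycle $a_1b_1a_2b_2$, which has a perfect matching and hence no negative cycle for $c-f'$, the dual cycle enclosing the single vertex $a_1$ has $A_1=\{a_1\}$, $B_1=\emptyset$, so its $-f'$ weight is $|B_1|-|A_1|=-1$ while its $c-f'$ weight is $2-1=+1$. Thus ``there exists a negative cycle w.r.t.\ $-f'$'' does not imply ``there exists one w.r.t.\ $c-f'$,'' and your chain of implications does not (and cannot) deliver that unrestricted biconditional. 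The corollary is true only under the reading that ``one such negative weight cycle'' restricts attention to cycles of the special form produced by the surgery, i.e.\ those with no capacity-one edge crossing from $A_1$ to $B_2$ (your normal form); you should state that restriction explicitly instead of asserting the equivalence for arbitrary cycles negative with respect to $-f'$.
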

\begin{proof}[of Theorem \ref{thm:perfectmatching}(b)]
By Corollary \ref{cor:negcycle}, we can find negative weight cycle in $ G^*$ with respect to the weights $ cn^4 - f' $ in polynomial time and $ O(n^{\frac{1}{2}+\epsilon}) $ space. Since $ N(A_1) \subseteq B_1 $ and $ |A_1| > |B_1| $, so the set $ A_1 $ forms a Hall-obstacle for $G$ and this completes the proof.
\end{proof}

\subsection{Deciding Even Perfect Matching}
\emph{{\EPM}} denotes the following problem: given a graph $G$ with each edge colored with either Red or Blue, decide whether there exists a perfect matching containing even number of Red edges. Now consider the {\EPM} problem in planar bipartite (undirected) graphs.
\begin{proof}[Proof of Theorem \ref{thm:evenPM}]
Given a planar bipartite (undirected) graph $G=(V,E)$, first construct a perfect matching $M$ in it, which can be done in polynomial time and $ O(n^{\frac{1}{2}+\epsilon}) $ space (by Theorem \ref{thm:perfectmatching}(a)). If $M$ contains even number of red edges, then there is nothing to do. Otherwise, construct an weighted directed graph $ H $ with weight function $w$, as follows: $ H $ contains an edge $(u,v)$ if and only if $ \exists x \in V $ such that $ \{u,x\} \in M $ but $ \{x,v\} \not \in M $. If the matching edge $ \{u,x\} $ and the non-matching edge $ \{x,v\} $ are of the same color, then set $w(u,v)=0$ to $0$; otherwise set $w(u,v)=1$.
\begin{claim}
There exists a perfect matching in $G$ consisting of even number of red edges if and only if $ H $ contains an odd-weight cycle.
\end{claim}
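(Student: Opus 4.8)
The plan is to read directed cycles in $H$ as $M$-alternating cycles in $G$ and to track how the parity of the number of red edges changes when $M$ is updated along such a cycle. First I would record the correspondence underlying the construction of $H$: a directed edge $(u,v)$ encodes a length-two path $u - x - v$ in $G$ in which $\{u,x\} \in M$ is a matching edge and $\{x,v\} \notin M$ is a non-matching edge. Consequently a simple directed cycle $u_1 \to u_2 \to \cdots \to u_k \to u_1$ in $H$ unfolds into the closed walk $u_1 - x_1 - u_2 - x_2 - \cdots - u_k - x_k - u_1$ in $G$, where $x_i$ is the $M$-partner of $u_i$. Because $G$ is bipartite and the $u_i$ are distinct, all the $u_i$ lie on one side and all the $x_i$ on the other, so this walk is a genuine simple cycle that alternates between matching and non-matching edges; conversely every $M$-alternating cycle of $G$, traversed so that each step reads ``matching edge then non-matching edge'', yields such a directed cycle in $H$.

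The crucial arithmetic step is a parity identity. Write $a_i = 1$ if the matching edge $\{u_i, x_i\}$ is red and $a_i = 0$ otherwise, and $b_i = 1$ if the non-matching edge $\{x_i, u_{i+1}\}$ is red and $b_i = 0$ otherwise. By the definition of $w$ we have $w(u_i, u_{i+1}) = a_i \oplus b_i$, so the total weight of the cycle satisfies $\sum_i w(u_i, u_{i+1}) \equiv \sum_i (a_i + b_i) \pmod 2$. On the other hand, replacing the matching edges $\{u_i, x_i\}$ by the non-matching edges $\{x_i, u_{i+1}\}$, i.e.\ flipping $M$ along the alternating cycle, produces a new perfect matching $M'$, and the change in the number of red edges is $\sum_i b_i - \sum_i a_i$, which has the same parity as $\sum_i (a_i + b_i)$. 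Hence the weight of the directed cycle in $H$ is odd if and only if flipping $M$ along the corresponding alternating cycle flips the parity of the red count.

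With this identity the two directions follow quickly. For the ``if'' direction, an odd-weight cycle in $H$ gives an $M$-alternating cycle whose flip changes the red count by an odd number; since $M$ was assumed to contain an odd number of red edges, the resulting perfect matching has an even number of red edges. For the ``only if'' direction, suppose $G$ has a perfect matching $M'$ with an even number of red edges. The symmetric difference $M \triangle M'$ is a vertex-disjoint union of $M$-alternating cycles, and the net change in the red count from $M$ to $M'$ equals the sum of the per-cycle changes. Since that net change is odd (even minus odd), at least one alternating cycle in $M \triangle M'$ changes the red count by an odd amount, and by the identity it maps to an odd-weight directed cycle in $H$.

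The main obstacle I anticipate is not any single deduction but the careful bookkeeping tying the three parities together -- the total weight $\sum_i w$, the number of red edges in the alternating cycle, and the red-count difference between $M$ and $M'$ -- together with verifying the standard fact that flipping $M$ along a single alternating cycle again yields a perfect matching. I would also double check the orientation issue: an undirected alternating cycle admits two ``matching-first'' traversals that give two distinct directed cycles in $H$ (on the two opposite sides of the bipartition), so I must confirm by a short $\oplus$-computation that both have the same weight parity, ensuring that the correspondence is parity-consistent regardless of the chosen orientation.
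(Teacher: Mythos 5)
Your proposal is correct and follows essentially the same route as the paper: the forward direction flips $M$ along the $M$-alternating cycle obtained by unfolding the odd-weight cycle of $H$, and the reverse direction decomposes $M\triangle M'$ into alternating cycles and extracts one whose red count (equivalently, whose image in $H$) has odd parity. Your explicit parity identity $\sum_i w(u_i,u_{i+1})\equiv\sum_i(a_i+b_i)\pmod 2$ and the check that both orientations of an alternating cycle give the same weight parity make the bookkeeping cleaner than the paper's, but the underlying argument is identical.
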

\begin{proof}
Suppose $ H $ contains an odd-weight cycle. Now consider the corresponding portion of the graph in $G$, which is an even length cycle $C$ consisting of alternating matched edge and non-matched edges. Now if we consider a new matching where every non-matched edge in $C$ becomes matched and vice versa, and the new perfect matching (the new matching is perfect as it will not affect the other part of matching in $M$ and also matches every vertex in $C$) with even number of red edges (as there are odd number of pair $ \{u,x\},\{x,v\} $ such that $ \{u,x\} \in M $ but $ \{x,v\} \not \in M $ and they are of different colors).\\
For the reverse direction, lets assume that $ M' $ be a perfect matching in $G$ consisting of even number of red edges. If $ M \cap M' \ne \phi $, then discard those common edges and now consider the sub-graph of $G$, say $ G' $, which contains a edge e if either $ e \in M $ or $ e \in M' $ but $ e \not \in (M \cap M') $. Now vertices in each of the connected component of $ G' $ are of degree $2$ and thus each connected component is just a cycle. Now as $M$ contains odd number of red edges and $ M' $ contains even number of red edges, so at least one of the cycles in $ G' $ contains odd number of red edges and if we consider the corresponding cycle in $ H $ (every cycle in $ G' $ corresponds to one cycle in $ H $), then it must be of odd weight.
\end{proof}

Now observe that the process of construction of $ H $ is nothing but contraction of matched edges present in $M$ on $G$ and as $G$ is a planar graph so we the directed graph $ H $ is also a planar graph. To check the presence of odd-weight cycle in $ H $, we construct another directed graph $ H' $ from $ H $ using the following process: replace every edge $(x,y)$ having weight $0$ by two edges $ (x,v_{xy}) $ and $ (v_{xy},y) $ each with weight $1$. It is easy to see that as $ H $ is a directed planar graph, so is the graph $ H' $.
\begin{claim}
$ H $ contains an odd-weight cycle if and only if $ H' $ contains an odd-weight cycle.
\end{claim}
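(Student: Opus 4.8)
The plan is to exhibit a weight-parity-preserving correspondence between the cycles of $H$ and those of $H'$ induced directly by the construction. The key structural observation is that the transformation only touches the weight-$0$ edges: each weight-$1$ edge of $H$ survives unchanged in $H'$, while each weight-$0$ edge $(x,y)$ is subdivided into the length-two path $x \to v_{xy} \to y$ consisting of two weight-$1$ edges. Crucially, every newly introduced vertex $v_{xy}$ has in-degree exactly $1$ (only $(x,v_{xy})$ enters it) and out-degree exactly $1$ (only $(v_{xy},y)$ leaves it).

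First I would use this degree constraint to set up the correspondence. Any cycle in $H'$ that visits some $v_{xy}$ is forced to enter along $(x,v_{xy})$ and leave along $(v_{xy},y)$, so it traverses the entire path $x \to v_{xy} \to y$; contracting every such length-two path back to the single edge $(x,y)$ turns a cycle of $H'$ into a cycle of $H$. Conversely, expanding each weight-$0$ edge of a cycle of $H$ into its corresponding length-two path yields a cycle of $H'$. These two operations are mutually inverse, so they give a bijection between the cycles of $H$ and the cycles of $H'$ (cycles avoiding all the $v_{xy}$ vertices are simply fixed by this bijection).

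Next I would track the weight along this bijection. Let a cycle $C$ in $H$ have $a$ edges of weight $1$ and $b$ edges of weight $0$, so its total weight is $a$. Under expansion, the $a$ weight-$1$ edges are preserved and each of the $b$ weight-$0$ edges contributes $2$ to the weight of the image cycle, giving total weight $a + 2b$. Since $a + 2b \equiv a \pmod 2$, the image cycle in $H'$ has the same weight parity as $C$, and symmetrically in the other direction. Hence $H$ has a cycle of odd weight if and only if $H'$ does, which is exactly the claim.

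I do not expect a genuine obstacle here: the only point requiring care is justifying that the correspondence is well defined and is a true bijection on cycles, which rests entirely on the in-degree/out-degree-one property of the subdivision vertices; once that is in place the parity computation $a+2b\equiv a\pmod 2$ finishes the argument. The payoff of passing from $H$ to $H'$ is that $H'$ is an \emph{unweighted} (all weight-$1$) directed planar graph, so detecting an odd-weight cycle in $H'$ is precisely the odd-length-cycle problem addressed by Proposition \ref{thm:oddccycle}, letting us complete the proof of Theorem \ref{thm:evenPM} within the stated time and space bounds.
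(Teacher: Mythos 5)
Your proposal is correct and follows essentially the same route as the paper: both arguments rest on the observation that subdividing a weight-$0$ edge into two weight-$1$ edges changes the cycle weight by $2$ and hence preserves parity, with the reverse direction obtained by contracting the subdivision paths back. Your version is slightly more careful in justifying that the correspondence is a genuine bijection (via the in-degree/out-degree-one property of the vertices $v_{xy}$), but there is no substantive difference in approach.
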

\begin{proof}
If a cycle in $ H $ uses an edge $(x,y)$ of weight $0$, then there will be a corresponding cycle which will contain the portion $ x \longrightarrow v_{xy} \longrightarrow y $ and as both the edges $ (x,v_{xy}) $ and $ (v_{xy},y) $ have weight $1$, so the parity of the resulting cycle will not change. So, if $ H $ contains an odd-weight cycle then $ H' $ also contains an odd-weight cycle.\\
Similar type of argument can be used to prove the reverse direction as replacing the edges $ (x,v_{xy}) $ and $ (v_{xy},y) $ in $ H' $ by the single edge $(x,y)$ of weight $0$ will result in a cycle of same parity in the graph $ H $.
\end{proof}
As the new graph $ H' $ contains each edge of weight $1$, so we can view this graph as an unweighted directed graph and then we can check the presence of odd length cycle in polynomial time and $ O(n^{\frac{1}{2}+\epsilon}) $ space (by Theorem \ref{thm:oddccycle}) and this completes the proof.
\end{proof}

\section*{Acknowledgement}
The first author would like to thank Surender Baswana for some helpful discussions and comments.

\bibliographystyle{alpha}
\bibliography{references}
\newpage


\newpage

\end{document}